\tikzstyle{nodo}=[circle,draw,fill,inner sep=0pt, minimum size=0.5*width("k")]
\tikzstyle{infinito}=[circle,inner sep=0pt,minimum size=0mm]
\newcommand\R{{\mathbb R}}
\newcommand\f{\frac}
\newcommand\dx{{\,dx}}
\newcommand\muR{\mu_\R}
\newcommand\muG{\mu_\G}
\newcommand\muRp{\mu_{\R^+}}
\newcommand\elevel{{\mathcal E}}
\newcommand\G{\mathcal G}
\newcommand\B{\mathcal B}
\newcommand\K{\mathcal K}
\newcommand\HH{\mathcal H}
\newcommand\be{\begin{equation}}
\newcommand\ee{\end{equation}}
\newcommand\eps{\varepsilon}
\newtheorem{theorem}{Theorem}[section]
\newtheorem{proposition}[theorem]{Proposition}
\newtheorem{lemma}[theorem]{Lemma}
\newtheorem{corollary}[theorem]{Corollary}
\theoremstyle{remark}
\newtheorem{remark}[theorem]{Remark}
\newtheorem*{remark*}{Remark}
\theoremstyle{definition}
\newtheorem{definition}[theorem]{Definition}
\date{}
\title{Negative energy ground states \\ for the $L^2$-critical NLSE  on metric graphs}
\author{Riccardo Adami\thanks{Author partially supported by the FIRB 2012 project ``Dispersive dynamics: Fourier
Analysis and Variational Methods".}, Enrico Serra, Paolo Tilli \\ \ \\{\small  Dipartimento di Scienze
Matematiche ``G.L. Lagrange'', Politecnico di Torino } \\ {\small
Corso Duca degli Abruzzi, 24, 10129 Torino, Italy}}
\begin{document}

\maketitle

\begin{abstract}
We investigate the existence of ground states with prescribed mass for the focusing nonlinear Schr\"odinger
equation  with $L^2$-critical power nonlinearity on noncompact quantum graphs.
We prove that, unlike the case of the real line, for certain classes of graphs
there exist ground states with negative energy for a whole interval of masses.
A key role is played by a thorough analysis of Gagliardo-Nirenberg inequalities and on
estimates of the optimal constants. Most of the techniques are new and suited to the investigation of
variational problems on metric graphs.
 \end{abstract}

\noindent{\small AMS Subject Classification: 35R02, 35Q55, 81Q35, 49J40.}
\smallskip

\noindent{\small Keywords: Minimization, metric graphs, critical growth,
  nonlinear Schr\"odinger \\ \hbox{} \hskip 1.65cm Equation.}

\section{Introduction}
In this paper we investigate the existence of ground states for
the \emph{critical} NLS energy functional
\begin{equation}
\label{NLSe} E (u,\G)
  =  \frac 1 2 \| u' \|^2_{L^2 (\G)}
- \frac 1 6  \| u \|^6_{L^6 (\G)} =\frac 1 2 \int_\G |u'|^2dx
-\frac 1 6 \int_\G |u|^6 dx
\end{equation}
on a noncompact metric graph $\G$, under the {\em mass constraint}
\begin{equation}
\label{mass} \| u \|^2_{L^2 (\G)} \ = \ \mu.
\end{equation}
The subcritical case, where the $L^6$ norm is replaced by an $L^p$
norm with
 $p\in (2,6)$, has been investigated in \cite{ast,ast2}.
The energy in \eqref{NLSe} is \emph{critical} in the sense that, under
the
mass-preserving transformations
\[
u(x)\quad\mapsto\quad u_\lambda(x):=\lambda^{1/2} \,u(\lambda\, x)\qquad
(\lambda>0),
\]
the kinetic and the potential terms in \eqref{NLSe} scale in the same way, namely
\begin{equation}
\label{scalE}
E(u_\lambda,\lambda^{-1}\G) \quad=\quad
\lambda^2\, E(u,\G),
\end{equation}
which is typical of critical problems with a strong loss of compactness.

Throughout the paper, $\G$ denotes a \emph{noncompact metric graph},
i.e. a connected metric space obtained by gluing together,
by the identification of some of their endpoints, a finite number of closed
line intervals (not necessarily bounded), according to the topology of
a graph, self-loops and multiple edges being allowed. Any bounded edge $e$ is identified with
an interval $[0,\ell_e]$, while
unbounded edges are referred to as  ``half-lines'', and are identified with (copies of)
the positive half-line $\R^+=[0,+\infty)$; at least one edge is assumed to be unbounded,
so that $\G$ is noncompact
(two very special cases are when $\G=\R^+$
and when $\G=\R$, the latter being obtained by gluing together two
copies of $\R^+$). We refer to Section~\ref{sec2} (see also \cite{berkolaiko, exner, ast}) for more
details.

In this framework, by a ``ground state of mass $\mu$'' we mean a solution
 to the minimization problem
 \begin{equation}
\label{minprob} \min_{u\in H^1_\mu(\G)} E(u,\G),\qquad
H^1_\mu(\G):=\left\{ u\in H^1(\G) :\,\Vert
u\Vert_{L^2(\G)}^2=\mu\right\},
\end{equation}
for which it is
clearly sufficient to work with real valued, nonnegative functions.
Obviously ground states solve, for some $\omega \in \R$, the stationary quintic NLS equation
\[
u'' + |u|^4u = \omega u
\]
on each edge of $\G$, with Kirchhoff boundary conditions at the vertices (see Prop. 3.3 in \cite{ast}).

The existence of ground states for a given $\mu$ is strictly related to the behavior
of the \emph{ground-state energy level} function
\begin{equation}
\label{defelevel}
\elevel_\G (\mu)=\inf_{u\in H^1_\mu(\G)} E(u,\G),\quad\mu\geq 0,
\end{equation}
which will play a central role throughout this paper.

As is wellknown (see Sec. 2), when $\G=\R$
there exists a \emph{critical mass} $\muR$
such that the minimization problem \eqref{minprob} has a solution
if and only if $\mu=\muR$, and the same occurs when $\G=\R^+$
(with a \emph{smaller} critical mass $\muRp=\muR/2$).
This severe restriction is due to the scaling rule \eqref{scalE} and the dilation-invariance
of
$\R$ and $\R^+$.  Thus, when $\G=\R$ or $\G=\R^+$, the
minimization process \eqref{minprob} is extremely unstable and, in
a sense, of little interest.

When $\G$ is a generic (noncompact) metric graph, however,
the problem
can be highly nontrivial and, depending on the topology of $\G$,
 entirely new phenomena may
arise, such as problem \eqref{minprob} having solutions if, and only
if,  $\mu$ belongs to some  \emph{whole interval} of masses.

For each graph $\G$ we can define, in a natural way, a
\emph{critical mass} $\muG$, that depends on $\G$ via the
best constant $K_\G$ in the Gagliardo-Nirenberg inequality
\eqref{GN}, and it turns out that $\muRp\leq \muG\leq\muR$,
so that $\R^+$ and $\R$ are extremal graphs, as concerns the
critical mass (see Proposition~\ref{intermediate}). The mass $\muG$
is the precise threshold such that $\elevel_\G(\mu)<0$ (possibly $-\infty$)
as soon as $\mu>\muG$ and, on a general ground, a \emph{necessary condition}
for the existence of ground states in \eqref{minprob} is that
$\mu\in [\muG,\muR]$ (see Proposition~\ref{banali}).

This condition, however,
is far from being sufficient: the true nature of problem~\eqref{minprob}
strongly depends on the topology of $\G$, and the
following (mutually exclusive) cases are possible:
\newcommand\casoI{(a)}
\newcommand{\casoII}{(b)}
\newcommand{\casoIII}{(c)}
\newcommand{\casoIV}{(d)}
\begin{itemize}
\item[\casoI] $\G$ has a \emph{terminal point} (a tip, Fig. \ref{figbaffo}). Then $\muG=\muRp$, and
problem~\eqref{minprob} has no solution unless $\mu=\muRp$ and
$\G$ is isometric to $\R^+$.
\item[\casoII] $\G$ admits a \emph{cycle covering} (Fig. \ref{figH}). Then $\muG=\muR$, and
problem~\eqref{minprob} has no solution unless $\mu=\muR$ and
$\G$ is isometric to $\R$ (or to one of a few very special, and completely classified,
other structures, see Theorem 2.5 in \cite{ast}).
\item[\casoIII]  $\G$ has exactly one half-line and no terminal point (Fig. \ref{figonehalf}). Then $\muG=\muRp$,
and problem~\eqref{minprob} has a solution if and only if $\mu\in(\muRp,\muR]$.
\item[\casoIV] In all other cases (Fig. \ref{figD}): if $\muG<\muR$, then
problem~\eqref{minprob} has a solution if and only if $\mu\in[\muG,\muR]$.
\end{itemize}
\begin{figure}
\begin{minipage}{0.48\textwidth}
\begin{tikzpicture}[xscale= 0.5,yscale=0.5] 
\node at (-.5,2) [nodo] (02) {};
\node at (2,2) [nodo] (22) {};
\node at (2,4) [nodo] (24) {};
\node at (3.6,1.6) [nodo] (42) {};
\node at (3,3) [nodo] (33) {};
\node at (5,2) [nodo] (52) {};
\node at (3,3) [nodo] (32) {};
\node at (4,3) [nodo] (43) {};
\node at (-1,4) [nodo] (04) {};
\node at (2,4) [nodo] (24) {};
\node at (.5,1) [nodo] (11) {};
\node at (2,0) [nodo] (20) {};
\node at (4,0) [nodo] (40) {};
\node at (-4,2) [minimum size=0pt] (meno) {};
\node at (-4,4) [minimum size=0pt] (menoalt) {};
\node at (7.9,2) [minimum size=0pt] (piu) {};
\node at (-4.1,2) [infinito]  (infmeno) {$\scriptstyle\infty$};
\node at (-4.1,4) [infinito]  (infmenoalt) {$\scriptstyle\infty$};
\node at (8,2) [infinito]  (infpiu) {$\scriptstyle\infty$};

\node at (6,4) [nodo] (term){};
\draw [-] (43)--(term);

\draw[-] (02)--(04);
\draw[-] (04)--(24);
\draw[-] (04)--(22);
\draw[-] (24)--(22);
\draw[-] (02)--(11);
\draw[-] (11)--(22);
\draw[-] (11)--(20);
\draw[-] (20)--(22);
\draw[-] (22)--(33);
\draw[-] (24)--(33);
\draw[-] (24)--(43);
\draw[-] (33)--(43);
\draw[-] (43)--(52);
\draw[-] (33)--(42);
\draw[-] (20)--(42);
\draw[-] (20)--(40);
\draw[-] (40)--(42);
\draw[-] (42)--(52);
\draw[-] (40)--(52);
\draw[-] (02)--(meno);
\draw[-] (52)--(piu);
\draw[-] (04)--(menoalt);
\end{tikzpicture}
\caption{\footnotesize{{case (a)}. }}
\label{figbaffo}
\end{minipage}
\begin{minipage}{0.48\textwidth}
\begin{tikzpicture}
[scale=1,style={circle,inner sep=0pt,minimum size=7mm}] 
\node at (0,0) [nodo] (1) {};
\node at (-1.5,0) [infinito]  (2){$\scriptstyle\infty$};
\node at (1,0) [nodo] (3) {};
\node at (0,2) [nodo] (4) {};
\node at (-1.5,2) [infinito] (5) {$\scriptstyle\infty$};
\node at (2,0) [nodo] (6) {};
\node at (3,0) [nodo] (7) {};
\node at (2,2) [nodo] (8) {};
\node at (3,2) [nodo] (9) {};
\node at (4.5,0) [infinito] (10) {$\scriptstyle\infty$};

\node at (4.5,2) [infinito] (12) {$\scriptstyle\infty$};
\draw [-] (1) -- (2) ;
 \draw [-] (1) -- (3);
 \draw [-] (1) -- (4);
 \draw [-] (3) -- (4);
 \draw [-] (5) -- (4);
 \draw [-] (3) -- (6);
 \draw [-] (6) -- (7);
 \draw [-] (6) to [out=-40,in=-140] (7);
\draw [-] (3) to [out=10,in=-35] (1.4,0.7); 
\draw [-] (1.4,0.7) to [out=145,in=100] (3); 
 \draw [-] (6) to [out=40,in=140] (7);
 \draw [-] (6) -- (8);
 \draw [-] (6) to [out=130,in=-130] (8);
 \draw [-] (7) -- (8);
 \draw [-] (8) -- (9);
  \draw [-] (7) -- (9);
  \draw [-] (9) -- (12);
  \draw [-] (7) -- (10);
\end{tikzpicture}

\caption{\footnotesize{case (b). }}
\label{figH}

\end{minipage}
\vspace{1cm}

\begin{minipage}{0.48\textwidth}
\vspace{0.7cm}
\begin{tikzpicture}[xscale= 0.5,yscale=0.5]  
\node at (4.5,2) [nodo] (02) {};
\node at (7,2) [nodo] (22) {};
\node at (7,4) [nodo] (24) {};
\node at (8.6,1.6) [nodo] (42) {};
\node at (8,3) [nodo] (33) {};
\node at (10,2) [nodo] (52) {};
\node at (8,3) [nodo] (32) {};
\node at (9,3) [nodo] (43) {};
\node at (5.5,3) [nodo] (04) {};
\node at (7,4) [nodo] (24) {};
\node at (5.5,1) [nodo] (11) {};
\node at (7,0) [nodo] (20) {};
\node at (9,0) [nodo] (40) {};
\node at (-3,2) [minimum size=0pt] (meno) {};

\node at (-3.1,2) [infinito]  (infmeno) {$\scriptstyle\infty$};

\draw[-] (02)--(04);
\draw[-] (04)--(24);
\draw[-] (04)--(22);
\draw[-] (24)--(22);
\draw[-] (02)--(11);
\draw[-] (11)--(22);
\draw[-] (11)--(20);
\draw[-] (20)--(22);
\draw[-] (22)--(33);
\draw[-] (24)--(33);
\draw[-] (24)--(43);
\draw[-] (33)--(43);
\draw[-] (43)--(52);

\draw[-] (33)--(42);
\draw[-] (20)--(42);
\draw[-] (20)--(40);
\draw[-] (40)--(42);
\draw[-] (42)--(52);
\draw[-] (40)--(52);
\draw[-] (02)--(meno);

\end{tikzpicture}
\caption{\footnotesize{case (c). }}
\label{figonehalf}
\end{minipage}
\begin{minipage}{0.48\textwidth}
\vspace{0.7cm}
\hspace{0.5cm}
\begin{tikzpicture}[xscale= 0.5,yscale=0.5] 
\node at (6,0) [infinito]  (1) {$\scriptstyle\infty$};
\node at (12,0) [nodo] (2) {};
\node at (18,0) [infinito]  (3) {$\scriptstyle\infty$};
\node at (16,0)  [minimum size=0pt] (5) {};
\node at (8,0) [minimum size=0pt] (4) {};
\node at (12,2.4) [nodo] (6) {};
\draw[-] (1) -- (3);
\draw(12,3) circle (0.6);
\draw[-] (2) -- (6);
\end{tikzpicture}
\caption{\footnotesize{case (d). }}
\label{figD}
\end{minipage}
\end{figure}

Some remarks are in order, to better clarify the scope of this scenario
(the precise statements, which are the main results of the paper,
  are given in Theorems~\ref{teobaffo}, \ref{teoH}, \ref{teounasemi}
and \ref{quartocaso}).

In the first two cases ground states, as a rule, do not exist.
In case \casoI,
the presence of a tip --hence of a terminal edge--
allows the construction of ``monotone'' functions of mass $\muRp$,
that decrease away from the tip and mimic a
half-soliton on $\R^+$,
with an energy level arbitrarily close to zero (albeit strictly positive, unless
$\G$ is exactly $\R^+$): thus, in a sense,
graphs with a tip behave much like a half-line.
In case \casoII, by contrast,
the covering assumption is not compatible with ``monotone'' functions and, due to
a  rearrangement argument from $\G$ to $\R$,
no function  of mass
$\muR$  can have a negative energy on $\G$. This rigidity rules out ground states,
unless $\G$ supports a soliton, and this, in turn, occurs only when $\G=\R$
(possibly with the identification of some pairs of points,
in a way compatible with the even symmetry of a soliton).  Thus, dually,
a graph as in \casoII{} behaves much like $\R$.

The last two cases are, on the contrary, extremely nontrivial.
In \casoIII, $\G$ consists of a compact core $\K$ (with no
terminal edge) attached to a half-line, the simplest example being
the ``tadpole'' graph in Fig.~\ref{figtadpole}.

\begin{figure}
\begin{center}
\begin{tikzpicture}[scale= 1.7]  
\node at (-2.5,0) [infinito]  (1) {$\scriptstyle\infty$};
\node at (1.8,0) [nodo] (2) {};
\draw [-] (1) -- (2) ;
\draw(2.2,0) circle (0.4);
\end{tikzpicture}
\caption{\footnotesize{a tadpole graph. }}
\label{figtadpole}
\end{center}
\end{figure}
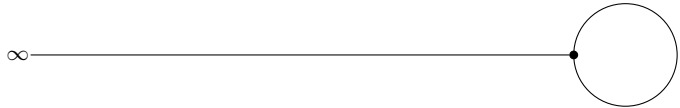

If $\mu\in (\muRp,\muR]$,
a ground state of mass $\mu$ always exists --with a \emph{strictly negative}
energy-- and this  is a completely new phenomenon.
Over $\R^+$, due to \eqref{scalE},   one
has $\elevel_{\R^+}(\mu)=-\infty$ (and no ground state)
as soon as $\mu>\muRp$: here, on the contrary,  the compact core $\K$
attached to
$\R^+$
has the effect of a \emph{stabilizer} as regards ground states:
due to $\K$, $\G$ loses dilation invariance, and high concentration
is no longer energetically convenient, which accounts for
\emph{strictly negative} (yet finite!) ground-state energy levels.

Finally, in \casoIV,   $\G$ has
no tip, no cycle covering and (being noncompact)
at least two half-lines. This case
becomes very interesting if one  \emph{further} assumes that
$\muG\!<\!\muR$, which guarantees the existence of ground states for every mass
$\mu\in [\muG,\muR]$: here, contrary to \casoIII,
a ground state exists also when  $\mu=\muG$, with
a zero energy level. A particularly interesting, and specific, feature of the case $\mu=\muG$ is the
coexistence of compact and noncompact minimizing sequences. Thus an additional
difficulty in this case is the {\em choice} of a proper sequence to work with.

Explicit examples of graphs can be
constructed
(e.g. the ``signpost" graph in Fig. \ref{figD}, as explained in Sec. 3),
where $\muG\!<\!\muR$: this extra assumption, however,
is crucial to prove the existence of ground states, and we believe that
it cannot be dropped in general.

More precisely we believe that, within case \casoIV, the \emph{sole} topology
of $\G$ is not enough, in general, to establish whether
$\muG<\muR$ or $\muG=\muR$ and, in the latter case,
whether a ground states exists, of mass $\muG$.
It is an open problem, at present, to fully understand problem \eqref{minprob}
in case \casoIV, when $\muG=\muR$.

Thus, summing up, the four cases \casoI--\casoIV{} cover all the
possible topologies of a (noncompact) metric graph $\G$. The first
two cases are extremely rigid, with
ground states being the exception rather
than the rule. Case \casoIII{} is, on the contrary, very interesting,
with  ground states in a universal range of prescribed masses,  and
one may consider such graphs as ``intermediate'' between $\R^+$ and $\R$.
Finally, case \casoIV{} is also nontrivial
(with  ground states in a whole, closed interval of prescribed masses),
but this is subordinated to the condition that $\muG<\muR$: in this
case $\G$ shows, again, an intermediate behavior between $\R^+$ and $\R$.
To conclude this discussion we wish to emphasize that all the
ground states of cases \casoIII{} and \casoIV{} (except those of mass $\muG$) have {\em negative energy}.
This is in sharp contrast with the behavior of the NLS equation on $\R$ (or $\R^n$),
where every solution with negative energy blows up in finite time (\cite{cazenave}),
and will be the object of a forthcoming paper.

\medskip

The problem of the minimization of \eqref{NLSe} under the constraint \eqref{mass} can be interpreted
according to the Gross-Pitaevskii theory for the ground state of the Bose-Einstein condensates.
Indeed, following
a series of results obtained in the last decade by several authors (see e.g. \cite{ls,lsy, ly,agt,esy1,esy2,esy3,pickl,bos,holmer}),
under some physical conditions the dynamics
of a gas of interacting identical bosons can be described through a one-body nonlinear equation, called {\em Gross-Pitaevskii equation}. More
precisely, if the particles in the gas interact in pairs, then
the resulting equation displays a {\em cubic} nonlinearity.
On the other hand, in \cite{chen}
a system of identical bosons interacting through a {\em three}-body potential was considered, and the resulting
equation was shown to be a {\em quintic} NLS.  

In actual dilute Bose-Einstein condensates, two- and three-body interactions may coexist, so that in the resulting one-body equation
both cubic and quintic terms would arise, even though normally the effect of the cubic term overwhelms the effect of the quintic, that is therefore neglected.
Concerning the sign of the nonlinear terms, it turns out to depend on the character  attractive or repulsive of the interaction among the particles,
so that it is possible to realize experimentally condensates that display either a focusing or a defocusing behaviour (\cite{wieman}).

For these reasons, and since both the functional \eqref{NLSe} and the $L^2$-norm are conserved by the evolution
driven by the quintic NLS equation
\begin{equation} \label{quintic}
i \partial_t  u (t,x) \ = \ - u'' (t,x) - |u(t,x)|^4 u (t,x),
\end{equation}
the issue of the existence of a minimizer of the constrained energy
can be interpreted as the search for the
ground state of a particular Bose-Einstein condensate
where two-body interactions are absent.

With respect to the problems currently studied as regards ground states of a Bose-Einstein condensate
and to the rigorous derivation of equation \eqref{quintic} obtained in \cite{chen},
let us stress
two remarkable differences: first, we consider a {\em focusing} nonlinearity, second,
we set the problem
on a graph. Both features have nowadays
an established experimental counterpart: on the one hand, self-concentrating condensates are currently realised \cite{wieman},
on the other hand, condensation on graph-like structures has been recently observed in \cite{lorenzo}.

While linear dynamics on quantum graphs is nowadays a well-known branch of mathematical physics (see e.g. \cite{berkolaiko, exner,post}),
its nonlinear counterpart has gained an increasing interest only quite recently.
A seminal study of nonlinear evolution equation on ramified structures appeared in \cite{ali} and then was extended to
several physical domains, involving various mathematical issues \cite{bona,vonbelow}. The study of
the evolution of solitary waves on star graphs was performed in \cite{acfn1}, while the search for ground states states was carried out in
\cite{matrasulov,sobirov,acfn2,acfn3,acfn4} for the case of star graphs, and then extended to more general
graphs (dealing with the same class considered in this paper) in \cite{ast,ast2,sven}. Stationary states and related bifurcation
were recently investigated in \cite{noja,marzuola, pelinovsky}. In particular, \cite{schneider} treats the case of periodic graphs,
which is not covered in this paper, and shows the occurrence of a bifurcation phenomenon. The integrability of
the cubic NLS on star graphs was proved in \cite{caudrelier}.
All cited papers deal with the subcritical case or even specialize to the cubic case. To our knowledge,
the present work is the first contribution to the understanding of the role of the $L^2$-criticality in the framework of graphs. The
fact that this role appears to be different from what happens on standard domains like $\R^n$ is in our opinion worth
being stressed.

The paper is organised as follows: in Section 2 we give some preliminary results and introduce the notion of critical
mass; in Section 3 we state the four main theorems. The core of the proofs  is  a result stated and proved in Section 4. Finally, in Section 5 we conclude the proofs of the
main existence results stated in Section 3.

\section{Notation and preliminary results}
\label{sec2}

It is well known (\cite{cazenave}) that when $\G=\R$, the ground-state energy level
function, as defined in \eqref{defelevel}, has a sharp transition from $0$ to $-\infty$,
corresponding to a special value $\muR$ of the mass, known as the \emph{critical mass}:
\begin{equation}
\label{Rlevel}
\elevel_\R(\mu) = \begin{cases}
 0 & \text{ if } \mu \le \mu_\R \\
  -\infty & \text{ if } \mu > \mu_\R
  \end{cases}\quad\qquad \left(\muR = \pi\sqrt 3 /2\right).
\end{equation}
Furthermore, the infimum $\elevel_\R(\mu)$ is attained
(i.e. a ground state exists in \eqref{minprob})
{\em if and only if} $\mu = \mu_\R$. Thus every ground state
$u$ (necessarily of mass $\muR$) satisfies $E(u,\R) =0$.
The ground states, called \emph{solitons}, form a quite large family:
up to phase multiplication and translations, they can be written as
\begin{equation} \label{solitoni}
\phi_\lambda(x) =  \sqrt\lambda \phi(\lambda x), \qquad \lambda>0,
\end{equation}
where
\begin{equation}
\label{solit}
\phi (x) =  \hbox{\rm{sech}}^{1/2}\left(2x/{\sqrt 3}\right).
\end{equation}

When $\G=\R^+$ (the positive half-line), the situation is similar, but with the proper
critical mass $\muRp=\muR/2$:
\[
\elevel_{\R^+}(\mu) =
\begin{cases} 0 & \text{ if } \mu \le \muRp\\
-\infty & \text{ if } \mu > \muRp
 \end{cases}
 \quad\qquad \left(\muRp = \pi\sqrt 3 /4\right).
\]
Again,
ground states of mass $\mu$ exist if and only if $\mu=\muRp$.
They are called ``half-solitons'', as
they are the restrictions to $\R^+$ of the family $\phi_\lambda$.

Thus, when $\G$ is $\R$ or $\R^+$, problem \eqref{minprob} is trivialized by these sharp transitions.

For a general noncompact graph $\G$,
the behavior of $\elevel_\G(\mu)$
is strictly related to the
 Gagliardo--Nirenberg inequality
\begin{equation}
\label{GN}
\| u\|_{L^6(\G)}^6 \le K_\G \,\| u\|_{L^2(\G)}^4\,  \| u'\|_{L^2(\G)}^2,\quad
\forall u\in H^1(\G)
\end{equation}
valid for every  noncompact graph (see \cite{ast2,tentarelli}).
The number $K_\G$ is the \emph{best constant} that one can put in \eqref{GN}, namely,
\begin{equation}
\label{defKG}
K_\G =\sup_{u\in H^1(\G) \atop u\not\equiv 0}
\frac{\| u\|_{L^6(\G)}^6}{\|u\|_{L^2(\G)}^4\cdot\|u'\|_{L^2(\G)}^2}
=\sup_{u\in H_\mu^1(\G)}\; \frac {\Vert u\Vert_{L^6(\G)}^6}
{\mu^2\cdot\Vert u'\Vert_{L^2(\G)}^2},
\end{equation}
where the last equality follows from homogeneity.

The role of this constant, in connection with the behavior of $\elevel_\G(\mu)$, is clear:
recalling \eqref{NLSe} and the definition of $H^1_\mu(\G)$ in \eqref{minprob},
using \eqref{GN} we have, for every $u\in H_\mu^1(\G)$,
\[
E(u,\G)\geq  \frac12 \|u'\|_{L^2(\G)}^2 -\frac16 K_\G \mu^2 \|u'\|_{L^2(\G)}^2 =
\frac16 \Vert u'\Vert_{L^2(\G)}^2  \bigr(3-K_\G \,\mu^2\bigr).
\]
We then see that
\begin{equation}
\label{elevelpos}
\mu^2 \le 3/K_\G  \implies   E(u,\G) \ge 0 \qquad\hbox{for all}\quad u\in H_\mu^1(\G).
\end{equation}
Note also that
\begin{equation}
\label{Epos}
\mu^2 < 3/K_\G  \implies   E(u,\G) > 0\qquad\hbox{for all}\quad u\in H_\mu^1(\G).
\end{equation}
On the other hand, if $\mu^2 >3/K_\G$, say $\mu^2 =3(1+\delta)/K_\G$ for some $\delta>0$,
we can take $u \in H_\mu^1(\G)$  close to optimality in \eqref{defKG}, say
\[
\| u\|_{L^6(\G)}^6 > \frac{K_\G}{1+\delta} \,\mu^2\,  \| u'\|_{L^2(\G)}^2,
\]
to obtain
\begin{equation}
\label{dopo}
E(u,\G)
< \f 1 6 \Vert u'\Vert_{L^2(\G)}^2  \bigr(3-\frac{K_\G}{1+\delta} \,\mu^2\bigr)  = 0.
\end{equation}
This shows that
\begin{equation}
\label{elevelneg}
\mu^2 > 3/K_\G  \implies   E(u,\G) < 0 \qquad\hbox{for some}\quad u\in H_\mu^1(\G).
\end{equation}
Now \eqref{elevelneg} and \eqref{elevelpos} justify the following definition.

\begin{definition}
\label{critmass}
The {\em critical mass} for a noncompact metric graph  $\G$ is the number
\[
\muG = \sqrt{3/ K_\G}.
\]
\end{definition}
This definition, of course, gives the correct critical mass when $\G$ is $\R$ or $\R^+$.
In general, from \eqref{Epos} and \eqref{elevelneg}, we see that $\muG$ is the precise
mass threshold, after which the ground-state energy level $\elevel_\G(\mu)$ becomes negative
(possibly $-\infty$).

\begin{remark}\label{remSC}
Any noncompact metric graph $\G$ has (at least) one unbounded
edge which, in turn, contains arbitrarily large intervals.
Therefore, any function $v\in H^1(\R)$ having \emph{compact support}  can
be regarded as an element of $H^1(\G)$, by placing the support of $v$ inside
a half-line of $\G$, and setting $v\equiv 0$ outside. Thus, in a sense, $H^1(\G)$ ``contains''
a dense subset of $H^1(\R)$.
\end{remark}

Next we notice that any noncompact $\G$ is, in a way, intermediate between $\R^+$ and $\R$  in the
sense of the following statement.

\begin{proposition}
\label{intermediate}
Let $\G$ be a noncompact graph, and let $\muG$ be the critical mass for $\G$. Then
\begin{equation}
\label{intermu}
\muRp\, \le\, \muG\, \le \,\muR
\end{equation}
or, equivalently,
\begin{equation}
\label{interK}
K_\R \,\le \, K_\G\,\le \,K_{\R^+}.
\end{equation}
Moreover, we also have
\begin{equation}
\label{interE}
\elevel_{\R^+}(\mu) \,\le \, \elevel_\G(\mu)\,\le \,\elevel_{\R}(\mu),\quad
\forall \mu>0.
\end{equation}
\end{proposition}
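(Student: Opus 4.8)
The plan is to reduce everything to two ingredients: the elementary embedding of compactly supported functions of Remark~\ref{remSC}, which will handle the inequalities involving $\R$, and the decreasing rearrangement of functions on $\G$ onto the half-line, which will handle the inequalities involving $\R^+$. The equivalence of \eqref{intermu} and \eqref{interK} is immediate, since $\muG=\sqrt{3/K_\G}$ and $K\mapsto\sqrt{3/K}$ is strictly decreasing; so it suffices to prove \eqref{interK} and \eqref{interE}. Throughout one may assume all competitors are nonnegative, since replacing $u$ by $|u|$ changes none of the three norms entering \eqref{defKG} or $E(u,\G)$.

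To obtain $K_\R\le K_\G$ and $\elevel_\G(\mu)\le\elevel_\R(\mu)$, fix $v\in H^1(\R)$ (for the first inequality) or $v\in H^1_\mu(\R)$ (for the second), and approximate it in $H^1(\R)$ by compactly supported functions $v_n$ — e.g.\ $v_n=\chi_n v$ with $\chi_n$ a smooth cutoff equal to $1$ on $[-n,n]$ — rescaling $v_n$ by the factor $(\mu/\|v_n\|_{L^2(\R)})^{1/2}$ in the energy case so that the mass constraint is met exactly. By Remark~\ref{remSC} each $v_n$ may be viewed as an element of $H^1(\G)$ (respectively of $H^1_\mu(\G)$) with the same norms $\|\cdot\|_{L^2}$, $\|\cdot\|_{L^6}$ and $\|(\cdot)'\|_{L^2}$. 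Since $H^1(\R)\hookrightarrow L^6(\R)$ and a nonzero element of $H^1(\R)$ has $\|v'\|_{L^2(\R)}>0$, both the Gagliardo--Nirenberg quotient in \eqref{defKG} and the energy $E(\cdot,\G)$ are continuous along $v_n\to v$; passing to the limit shows that the quotient of $v$ does not exceed $K_\G$, and that $\elevel_\G(\mu)\le E(v,\R)$. Taking the supremum over $v$, respectively the infimum over $v\in H^1_\mu(\R)$, yields the two inequalities — the energy one uniformly in $v$, hence also in the case $\elevel_\R(\mu)=-\infty$.

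For $K_\G\le K_{\R^+}$ and $\elevel_{\R^+}(\mu)\le\elevel_\G(\mu)$, given a nonnegative $u\in H^1(\G)$ let $u^*\colon\R^+\to\R$ be its monotone (decreasing) rearrangement, built from the distribution function of $u$. Then $u^*\in H^1(\R^+)$, equimeasurability gives $\|u^*\|_{L^p(\R^+)}=\|u\|_{L^p(\G)}$ for $p=2,6$, and the P\'olya--Szeg\H{o} inequality for graphs gives $\|(u^*)'\|_{L^2(\R^+)}\le\|u'\|_{L^2(\G)}$. Hence the Gagliardo--Nirenberg quotient of $u$ is bounded by that of $u^*$, thus by $K_{\R^+}$; moreover $E(u^*,\R^+)=\f12\|(u^*)'\|_{L^2(\R^+)}^2-\f16\|u\|_{L^6(\G)}^6\le E(u,\G)$, while $\|u^*\|_{L^2(\R^+)}^2=\|u\|_{L^2(\G)}^2$. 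Taking the supremum over $u$, respectively the infimum over $u\in H^1_\mu(\G)$, completes the proof. The one genuinely nontrivial point is the P\'olya--Szeg\H{o} inequality on an arbitrary noncompact metric graph: a level set $\{u>t\}$ may consist of many subintervals scattered over different edges, and one must verify, via a coarea/layer-cake argument, that collapsing them onto a single half-line does not increase $\int_\G|u'|^2$. I would invoke the rearrangement theory already developed for this class of graphs in \cite{ast} rather than redo it here, and I expect this to be the only delicate step, the rest being routine density and continuity.
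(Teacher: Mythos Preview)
Your proof is correct and follows essentially the same approach as the paper: the decreasing rearrangement onto $\R^+$ (with the P\'olya--Szeg\H{o} inequality from \cite{ast}) for the $\R^+$-side inequalities, and the density of compactly supported functions combined with Remark~\ref{remSC} for the $\R$-side inequalities. You supply a few more details (the explicit cutoff and the mass-rescaling), but the structure and the key ingredients are identical.
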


\begin{proof} Given  $u \in H^1(\G)$
(assume $u\geq 0$ and $u\not\equiv 0$), let $u^* \in H^1(\R^+)$ be
its decreasing rearrangement on $\R^+$ (for properties of rearrangements on graphs see \cite{ast, friedlander}).
Since
\[
\| (u^*)'\|_{L^2(\R^+)}^2\leq\| u'\|_{L^2(\G)}^2,
\quad
\| u^*\|_{L^p(\R^+)}^p=\| u\|_{L^p(\G)}^p
\quad\forall p,
\]
the quotient for $u$ in \eqref{defKG} does not exceed the same quotient
for $u^*$ (over $\R^+$): since the latter is bounded by $K_{\R^+}$, we obtain
that
$K_\G\le K_{\R^+}$. In the same way, we have $E(u^*,\R^+)\leq E(u,\G)$, and hence
$\elevel_{\R^+}(\mu)\leq E(u,\G)$, where $\mu:=\| u\|_{L^2(\G)}^2=\| u^*\|_{L^2(\R^+)}^2$.
By the arbitrariness of $u$, we obtain the first inequality in \eqref{interE}.

Finally, let $H^1_{\mu,c}$ denote the set of all $u\in H^1_\mu(\R)$ with compact support.
By a density argument, we have
\[
\elevel_\R(\mu)
=\inf_{u\in H^1_{\mu,c}(\R)} E(u,\R)
\geq \inf_{u\in H^1_\mu(\G)} E(u,\G)
=\elevel_\G(\mu)
\]
(the inequality follows from Remark~\ref{remSC}). This proves the second inequality
in \eqref{interE}; the first inequality
in \eqref{interK} is proved in the same way, working with the supremum in \eqref{defKG}.
\end{proof}

After this discussion, we summarize in the next proposition the properties that hold for a generic noncompact graph,
without any additional assumption.

\begin{proposition}
\label{banali}
Let $\G$ be a noncompact metric graph.
\begin{enumerate}
\item[(i)] If $\mu\le \muG$, then $\elevel_\G(\mu) = 0$, and is never attained when $\mu<\muG$.
\item[(ii)] If $\mu >\muG$, then $\elevel_\G(\mu) < 0$ (possibly $-\infty$).
\item[(iii)] If $\mu > \muR$, then $\elevel_\G(\mu) = -\infty$.
\end{enumerate}
\end{proposition}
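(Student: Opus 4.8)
The plan is to prove the three assertions of Proposition~\ref{banali} essentially by assembling facts already collected in this section, so the proof is short.

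\medskip

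\textbf{Part (ii).} This is the easiest: it is precisely the content of \eqref{elevelneg}. If $\mu>\muG$ then $\mu^2>3/K_\G$, so \eqref{elevelneg} produces some $u\in H^1_\mu(\G)$ with $E(u,\G)<0$, hence $\elevel_\G(\mu)<0$; nothing more is needed (whether the infimum is $-\infty$ is not claimed here).

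\medskip

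\textbf{Part (iii).} If $\mu>\muR$, combine Remark~\ref{remSC} with the scaling \eqref{scalE}. By \eqref{Rlevel} we have $\elevel_\R(\mu)=-\infty$, so there is a sequence $v_n\in H^1_\mu(\R)$ with $E(v_n,\R)\to-\infty$; by a density/truncation argument (exactly as in the last display of the proof of Proposition~\ref{intermediate}) we may take each $v_n$ to have compact support, and then Remark~\ref{remSC} lets us regard $v_n\in H^1_\mu(\G)$ with the same energy. Hence $\elevel_\G(\mu)\le \inf_n E(v_n,\G)=-\infty$. Alternatively, and even more directly, the second inequality in \eqref{interE} already gives $\elevel_\G(\mu)\le\elevel_\R(\mu)=-\infty$.

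\medskip

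\textbf{Part (i).} Here there are two things to prove. For the value of the infimum: if $\mu\le\muG$ then $\mu^2\le 3/K_\G$, so \eqref{elevelpos} gives $E(u,\G)\ge 0$ for every $u\in H^1_\mu(\G)$, whence $\elevel_\G(\mu)\ge 0$. For the reverse inequality, place a mass-$\mu$ bump of the form $v_\lambda(x)=\lambda^{1/2}v(\lambda x)$ (with $v\in H^1(\R)$ of compact support and mass $\mu$) inside a half-line of $\G$; by \eqref{scalE}, $E(v_\lambda,\G)=\lambda^2 E(v,\G)\to 0$ as $\lambda\to 0^+$, while keeping $E(v,\G)<+\infty$, which forces $\elevel_\G(\mu)\le 0$. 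Therefore $\elevel_\G(\mu)=0$. Finally, the non-attainment for $\mu<\muG$ is immediate: when $\mu<\muG$ we have $\mu^2<3/K_\G$, so \eqref{Epos} gives the \emph{strict} inequality $E(u,\G)>0$ for all $u\in H^1_\mu(\G)$, so the value $0=\elevel_\G(\mu)$ cannot be attained. (For $\mu=\muG$ the statement claims nothing about attainment, consistently with the discussion of cases \casoI--\casoIV{} above.)

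\medskip

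There is no real obstacle here; the only point requiring a little care is the justification, in (i) and (iii), that compactly supported functions on $\R$ suffice to approach $\elevel_\R(\mu)$, but this is a routine truncation argument and is in any case already invoked in the proof of Proposition~\ref{intermediate}, so one may simply cite that passage together with Remark~\ref{remSC} and the scaling identity~\eqref{scalE}.
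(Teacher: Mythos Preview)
Your proof is correct and follows essentially the same route as the paper's. The only minor difference is in part~(i), where the paper obtains $\elevel_\G(\mu)\le 0$ simply by citing \eqref{interE} together with \eqref{Rlevel} (since $\mu\le\muG\le\muR$ gives $\elevel_\G(\mu)\le\elevel_\R(\mu)=0$), whereas you re-derive this via an explicit scaling argument on a compactly supported bump; your approach works, but note that the identity $E(v_\lambda,\G)=\lambda^2 E(v,\G)$ is not literally \eqref{scalE} (which rescales the graph too) --- it holds here only because the support lies entirely in a half-line, where the computation reduces to the dilation-invariant case on $\R$.
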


\begin{proof} When $\mu \le \muG$, \eqref{elevelpos} shows that $\elevel_\G(\mu) \ge 0$.
On the other hand, we infer from \eqref{interE} that
$\elevel_\G(\mu)\leq \elevel_\R(\mu)$,
and the latter is zero due to \eqref{Rlevel}, since $\mu\leq \muG\leq \muR$ by \eqref{intermu}.
Moreover,
by \eqref{Epos} we
see that $\elevel_\G(\mu)$ is not attained when $\mu < \muG$.
This proves  (i).

By \eqref{elevelneg}, one immediately obtains (ii).

Finally, to prove (iii), observe that
$\elevel_\G(\mu)\leq \elevel_\R(\mu)=-\infty$, due
to \eqref{intermu} and \eqref{Rlevel}.

\end{proof}

\begin{corollary}\label{cor1} A necessary condition for the existence
of a ground state of mass $\mu$ in \eqref{minprob} is that
$\mu\in[\muG,\muR]$.
\end{corollary}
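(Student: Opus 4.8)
The plan is to read the result off directly from Proposition~\ref{banali}, by excluding the two complementary ranges $\mu<\muG$ and $\mu>\muR$ in which \eqref{minprob} cannot possibly have a minimizer.

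First I would treat the case $\mu<\muG$. By part~(i) of Proposition~\ref{banali} one has $\elevel_\G(\mu)=0$; on the other hand, since $\mu<\muG$ means $\mu^2<3/K_\G$, the strict Gagliardo--Nirenberg bound \eqref{Epos} gives $E(u,\G)>0$ for \emph{every} $u\in H^1_\mu(\G)$. Hence the infimum defining $\elevel_\G(\mu)$ is never achieved, and no ground state of mass $\mu$ can exist. Next I would treat the case $\mu>\muR$: here part~(iii) of Proposition~\ref{banali} yields $\elevel_\G(\mu)=-\infty$, so the infimum in \eqref{minprob} is not even finite, let alone attained, and again no ground state of mass $\mu$ exists. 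Putting the two cases together, the existence of a ground state of mass $\mu$ forces $\muG\le\mu\le\muR$, which is precisely the claim.

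There is no genuine obstacle here: the corollary is an immediate consequence of Proposition~\ref{banali}, and essentially all the work has already been done there (and, upstream, in the elementary inequalities \eqref{elevelpos}--\eqref{elevelneg} together with the rearrangement comparison \eqref{interE}). The only point worth emphasizing is that excluding $\mu<\muG$ relies on the \emph{strict} inequality \eqref{Epos}, not merely on $\elevel_\G(\mu)=0$: a vanishing energy level is by itself compatible with the existence of a minimizer, as indeed happens at $\mu=\muG$ in case~\casoIV.
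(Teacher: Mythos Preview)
Your argument is correct and is exactly the approach the paper intends: the corollary is stated without proof as an immediate consequence of Proposition~\ref{banali}, and your two-case exclusion (using part~(i) for $\mu<\muG$ and part~(iii) for $\mu>\muR$) is precisely how one reads it off. The extra remark about needing the strict inequality \eqref{Epos} is a nice clarification, though this is already absorbed into the statement of part~(i) itself.
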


\section{Statement of the main results}

In this section we state the main results of the paper
(Theorems \ref{teobaffo}--\ref{quartocaso}), thus
providing a precise and formal setting
for the four possible cases \casoI--\casoIV{}, that were informally described in the Introduction.

The following theorem covers case \casoI, represented in Fig. \ref{figbaffo}.

By a ``terminal point'' (or ``tip'')
we mean a point $x$, in the metric graph $\G$, that corresponds to a
\emph{vertex of degree one} in the underlying (combinatorial) graph.
Usually, $x$ is one of the two endpoints of a \emph{bounded} edge
attached to the rest of $\G$ only at the other endpoint, as a pendant: the only
exception is when $\G$ consists of exactly one unbounded edge (i.e. when
$\G=\R^+$), in which case the tip $x$ is the origin of the half-line.
We point out that the $\infty$-point of any half-line of $\G$ (though being
a vertex of degree one in the underlying combinatorial graph) is \emph{not}
a terminal point, since it is not a point of $\G$ (as a metric graph).

\begin{theorem}[graphs with a tip]
\label{teobaffo} Let $\G$ be a noncompact metric graph having at least one
terminal point (a tip). Then $\muG= \muRp$.
When $\mu \in (\muRp, \muR]$, $\elevel_\G(\mu) = -\infty$. When $\mu = \muRp$,
$\elevel_\G(\mu)=0$ and it is attained if and only if $\G$ is isometric to a half-line.
\end{theorem}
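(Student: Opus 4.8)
The plan is to establish, in order, the three assertions $\muG=\muRp$, $\elevel_\G(\mu)=-\infty$ for $\mu\in(\muRp,\muR]$, and the characterization of minimizers at $\mu=\muRp$. Observe that, once $\muG=\muRp$ is known, the identity $\elevel_\G(\muRp)=0$ is precisely Proposition~\ref{banali}(i). To prove $\muG=\muRp$, i.e.\ $K_\G=K_{\R^+}$, recall that Proposition~\ref{intermediate} gives $K_\G\le K_{\R^+}$, so I only need competitors on $\G$ whose Gagliardo--Nirenberg quotient \eqref{defKG} approaches $K_{\R^+}$. If $\G=\R^+$ this is trivial; otherwise the tip is the free endpoint of a bounded pendant edge $e\cong[0,\ell]$. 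Let $\phi$ be the restriction to $\R^+$ of the soliton \eqref{solit} (a half-soliton, whose quotient on $\R^+$ equals $K_{\R^+}$), set $\phi_\lambda(x)=\sqrt\lambda\,\phi(\lambda x)$, and fix a smooth cutoff $\chi$ equal to $1$ near $0$ and to $0$ near $\ell$. The functions $v_\lambda:=\chi\,\phi_\lambda$ on $e$, extended by $0$ on the rest of $\G$, lie in $H^1(\G)$; since $\phi$ decays exponentially, as $\lambda\to\infty$ the truncation at $\ell$ and the cutoff perturb $\|\phi_\lambda\|_{L^6}^6$, $\|\phi_\lambda'\|_{L^2}^2$ and $\|\phi_\lambda\|_{L^2}^2$ only by amounts negligible with respect to their leading behaviour (of order $\lambda^2$, $\lambda^2$ and constant, respectively), so the quotient of $v_\lambda$ tends to that of $\phi$ on $\R^+$, i.e.\ to $K_{\R^+}$. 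Hence $K_\G\ge K_{\R^+}$, and $\muG=\muRp$.

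For $\mu\in(\muRp,\muR]$ I would prove $\elevel_\G(\mu)=-\infty$ as follows. Now $K_\G\mu^2>K_\G\muRp^2=3$, so I fix $\eps>0$ with $(K_\G-\eps)\mu^2>3$ and pick, using the construction above, a $w$ supported in $e$ with $\|w\|_{L^6}^6>(K_\G-\eps)\|w\|_{L^2}^4\|w'\|_{L^2}^2$; a suitable constant multiple of $w$ gives $u\in H^1_\mu(\G)$, still supported in $e$, with
\be
E(u,\G)\ <\ \f16\,\|u'\|_{L^2}^2\bigl(3-(K_\G-\eps)\mu^2\bigr)\ <\ 0 .
\ee
Since $u$ is supported inside $e\cong[0,\ell]$, the dilations $u_\sigma(x)=\sqrt\sigma\,u(\sigma x)$ ($\sigma\ge1$) are again supported in $e$, preserve the mass $\mu$, and obey $E(u_\sigma,\G)=\sigma^2 E(u,\G)$ (cf.\ \eqref{scalE}); letting $\sigma\to\infty$ yields $\elevel_\G(\mu)=-\infty$. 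Recalling $\elevel_\G(\muRp)=0$ from Proposition~\ref{banali}(i), this leaves only the attainment statement at $\mu=\muRp$.

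The implication ``if'' there is classical, so suppose conversely that some $u\ge0$, $u\not\equiv0$, realizes $\elevel_\G(\muRp)=0$. As a ground state, $u$ solves the stationary quintic equation with Kirchhoff conditions; hence $u$ is analytic on every edge and, by the strong maximum principle (cf.\ \cite{ast}), strictly positive on $\G$. Let $u^*\in H^1(\R^+)$ be the decreasing rearrangement of $u$: exactly as in the proof of Proposition~\ref{intermediate}, $0=\elevel_{\R^+}(\muRp)\le E(u^*,\R^+)\le E(u,\G)=0$, so $E(u^*,\R^+)=0$ and $u^*$ is a half-soliton; since $u$ and $u^*$ have equal $L^6$ norm, this forces $\|(u^*)'\|_{L^2(\R^+)}=\|u'\|_{L^2(\G)}$, i.e.\ the rearrangement preserves the Dirichlet energy, and since $u^*$ is strictly monotone every level set of $u$ has measure zero. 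By the equality case of the P\'olya--Szeg\H o inequality on graphs (a coarea plus Cauchy--Schwarz argument, cf.\ \cite{ast,friedlander}) this forces $u^{-1}(t)$ to consist of a single point for a.e.\ $t\in(0,\max_\G u)$.

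From this I would read off the topology of $\G$. First $\G$ has no cycle: on a cycle $u$ is continuous and (having no flat part) non-constant, so it attains every value strictly between its minimum and maximum on the cycle at least twice, over a whole interval of levels, contradicting the singleton property. Hence $\G$ is a tree. Next, the singleton property forces each superlevel set $\{u>t\}$ to be connected (a component of $\{u>t\}$ not containing a maximum point of $u$ would give a second point of $u^{-1}(t)$ for a non-degenerate range of $t$); since the sets $\{u>t\}$ are nested and, as $t\downarrow0$, exhaust $\G$ (here $u>0$ is used), a local analysis at every vertex $v$ --- using that $\G$ is a tree, that $\{u>t\}$ is connected, and that $u$, being analytic, is strictly monotone along each edge near $v$ --- shows that $v$ has degree at most $2$. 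Thus $\G$ is a tree all of whose vertices have degree $\le2$, i.e.\ a path; being noncompact it is not finite, and having a tip it is not doubly infinite, so $\G$ is isometric to $\R^+$. The delicate point of the whole argument is exactly this last rigidity step, namely translating the measure-theoretic fact ``almost every level set is a singleton'' into the topological conclusions ``no cycles, no vertices of degree $\ge3$''; the branching-vertex case, with the bookkeeping of the behaviour of $u$ near the vertex, is where the effort concentrates.
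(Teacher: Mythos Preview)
Your proof is correct; the first two steps (building near-optimal Gagliardo--Nirenberg competitors by placing scaled half-solitons on the terminal edge, then dilating within that edge to drive the energy to $-\infty$) are exactly the paper's argument.

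For the rigidity at $\mu=\muRp$ you take a genuinely different route. After reaching equality in the P\'olya--Szeg\H{o} inequality, the paper simply cites the equality case (Proposition~3.1 of \cite{ast}) to get that a.e.\ level of $u$ is a singleton, and then argues in one stroke that such a $u$ must be supported on an interval; since $u$ is then a half-soliton, hence everywhere positive on that interval, continuity prevents any branching of $\G$ off it, so $\G=\R^+$. You instead first upgrade to $u>0$ on all of $\G$ via the Euler--Lagrange equation and the maximum principle, then read off the topology of $\G$ directly from the singleton level-set property: no loops (intermediate values on a cycle are hit twice), no vertices of degree $\ge 3$ (by a local pigeonhole on the monotone germs of $u$ along the incident edges), hence $\G$ is a path with a tip, i.e.\ $\R^+$. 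Your argument is more self-contained---it does not outsource the equality case to \cite{ast}---but it buys this with extra analytic input (the stationary equation, analyticity, strict positivity) and the vertex bookkeeping you flag as delicate; the paper's version is shorter and purely variational, using only rearrangement and continuity. Note, incidentally, that for the degree-$\ge 3$ exclusion you do not actually need connectedness of $\{u>t\}$ or the tree property: the pigeonhole on the signs of $u-u(v)$ along the incident edges already gives two preimages for a whole interval of levels near $u(v)$.
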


The next theorem covers case \casoII, when $\G$ admits a cycle covering (see Fig. \ref{figH}).

Here and throughout, by ``cycle'' we mean either a \emph{loop} (a homeomorphic
image of $\mathcal S^1$) or, by extension, an unbounded path that joins two
(necessarily distinct) $\infty$-points of
$\G$. In the former case the cycle corresponds to a closed path
in the underlying combinatorial graph (i.e. it is a ``cycle'' in the usual sense of
graph theory) whereas, in the latter case, it does not (the two notions would essentially coincide,
however, if properly reformulated for the one-point compactification of $\G$).
Alternatively, in the underlying combinatorial graph,
one might identify all the $\infty$-points of $\G$ into a unique, special
vertex (of degree equal to the number of half-lines of $\G$): in this way, also cycles
of the second type would be usual cycles in the graph-theoretic sense.

The existence of a cycle covering is equivalent (see \cite{ast, ast2}) to a
property of $\G$, called ``assumption (H)'', first identified in \cite{ast} as a topological
obstruction to the existence of ground states in the subcritical cases. In particular, this assumption
is incompatible with the presence of tips and forces the graph to have at least two half-lines.

\begin{theorem}[graphs with a cycle covering]
\label{teoH} Let $\G$ be a noncompact metric graph that admits a cycle covering.
Then  $\muG= \muR$. The infimum $\elevel_\G(\mu)$ is attained if and only if
$\mu=\muR$ and $\G$ is $\R$ or a ``tower of bubbles'' (one of the special graphs described in Example 2.4 of \cite{ast}).
\end{theorem}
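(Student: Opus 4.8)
The plan is to exploit the rearrangement inequality onto $\R$, which is available because a cycle covering allows functions on $\G$ to be symmetrically rearranged onto the whole line without losing mass and without increasing the $L^2$-norm of the derivative. First I would recall (from \cite{ast, ast2}) that when $\G$ admits a cycle covering, for every $u\in H^1(\G)$ there exists a symmetric rearrangement $\widehat u\in H^1(\R)$, even and decreasing away from the origin, with $\|\widehat u\|_{L^p(\R)}=\|u\|_{L^p(\G)}$ for all $p$ and $\|\widehat u{\,}'\|_{L^2(\R)}\le\|u'\|_{L^2(\G)}$. Plugging $\widehat u$ into the Gagliardo--Nirenberg quotient \eqref{defKG}, the quotient for $u$ on $\G$ is bounded above by that for $\widehat u$ on $\R$, hence $K_\G\le K_\R$; combined with the general inequality $K_\R\le K_\G$ from Proposition~\ref{intermediate}, this gives $K_\G=K_\R$, i.e. $\muG=\muR$. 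The same rearrangement gives $E(\widehat u,\R)\le E(u,\G)$, so $\elevel_\G(\mu)\ge\elevel_\R(\mu)$, which together with \eqref{interE} yields $\elevel_\G(\mu)=\elevel_\R(\mu)$ for all $\mu$; in particular $\elevel_\G(\mu)=0$ for $\mu\le\muR$ and $\elevel_\G(\mu)=-\infty$ for $\mu>\muR$.

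It remains to analyze when the infimum is attained. By Corollary~\ref{cor1}, a ground state can exist only if $\mu\in[\muG,\muR]=\{\muR\}$, so necessarily $\mu=\muR$ and any ground state $u$ satisfies $E(u,\G)=\elevel_\G(\muR)=0$. The task is then to classify the zero-energy functions of mass $\muR$ on such a graph. Suppose $u$ is such a minimizer; then its rearrangement $\widehat u$ satisfies $E(\widehat u,\R)\le E(u,\G)=0=\elevel_\R(\muR)$, so $\widehat u$ is a soliton on $\R$ — one of the functions $\phi_\lambda$ in \eqref{solitoni} — and, crucially, the rearrangement must have been an \emph{equality} at the level of the derivative norm: $\|\widehat u{\,}'\|_{L^2(\R)}=\|u'\|_{L^2(\G)}$. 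The heart of the argument is then a rigidity statement: equality in the graph-to-line P\'olya--Szeg\H{o} inequality forces a very special structure on $u$ and on $\G$. I would invoke the equality analysis of the rearrangement (as in Example~2.4 and Theorem~2.5 of \cite{ast}): the level sets of $u$ must be ``balanced'' in a way that is only compatible with $\G$ being $\R$ itself, or a ``tower of bubbles'' obtained from $\R$ by identifying pairs of points symmetric with respect to the soliton's center — precisely the exceptional graphs on which a soliton descends to a well-defined $H^1$ function with Kirchhoff conditions. Conversely, on each such graph the descended soliton is an explicit zero-energy function of mass $\muR$, hence a ground state, so the characterization is sharp.

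The main obstacle is the rigidity step — turning equality in the rearrangement inequality into the exact list of admissible graphs. This is delicate because equality in P\'olya--Szeg\H{o} type inequalities generically only forces ``radial'' structure of the function, not of the domain, and here one must additionally use that $u\in H^1(\G)$ with the Kirchhoff matching conditions at vertices, plus the strict monotonicity of the soliton profile (so that all level sets $\{u>t\}$ are genuine, with no flat parts), to pin down the combinatorial structure of $\G$. I expect this to reduce to a careful edge-by-edge bookkeeping: along any level $t$, the preimage $u^{-1}(t)$ consists of exactly two points once one accounts for the covering, and tracking how these points sit on the edges forces $\G$ to be a linear chain of ``bubbles'' (bigons) symmetric about the soliton peak. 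The remaining parts of the proof — the rearrangement inequalities themselves and the explicit verification on the exceptional graphs — are by now standard and can be quoted from \cite{ast, ast2}.
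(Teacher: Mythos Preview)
Your proposal is correct and follows essentially the same route as the paper: symmetric rearrangement onto $\R$ under assumption~(H) to get $K_\G\le K_\R$ and hence $\muG=\muR$, then Proposition~\ref{banali}/Corollary~\ref{cor1} to restrict to $\mu=\muR$, then the equality case of the rearrangement (which the paper states as ``almost every point in the range of $u$ has exactly two preimages'') combined with Theorem~2.5 of \cite{ast} to force $\G$ to be $\R$ or a tower of bubbles. The only cosmetic difference is that you frame the rigidity step as equality in the P\'olya--Szeg\H{o} inequality on the derivative, while the paper phrases it directly via the two-preimage condition; these are equivalent and both defer the combinatorial classification to \cite{ast}.
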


A different behaviour occurs if one considers graphs without terminal points and with one half-line only: the critical mass turns out to coincide with $\mu_{\R^+}$, but
the ground state energy level remains finite if the mass does not exceed $\mu_\R$. Furthermore, in the interval $(\mu_{\R^+}, \mu_\R]$ a ground state exists and it has negative
energy.

\begin{theorem}[graphs with one half-line]
\label{teounasemi}
Let $\G$ be a noncompact metric graph having exactly one half-line
and no terminal point. Then $\muG=\muRp$. The infimum $\elevel_\G(\mu)$ is attained if and only if $\mu \in(\muRp,\muR]$.
\end{theorem}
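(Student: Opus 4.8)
The plan is to treat separately the two assertions, $\muG=\muRp$ and the precise range of masses for which $\elevel_\G(\mu)$ is attained. For the first, the inequality $\muG\ge\muRp$ is Proposition~\ref{intermediate}, so only $\muG\le\muRp$ needs work, i.e. $K_\G\ge K_{\R^+}$. Since $\G$ has one half-line, I would exhibit, for any half-soliton $\phi_\lambda$ on $\R^+$, a family of test functions on $\G$ that essentially reproduce the half-soliton on the half-line of $\G$ while carrying a negligible amount of mass and kinetic energy into the compact core $\K$: a natural way is to run $\phi_\lambda$ out toward the $\infty$-end of the half-line and truncate smoothly near the junction with $\K$. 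As $\lambda\to\infty$ the half-soliton concentrates at the origin of $\R^+$, so instead one should translate it far out along the half-line and let $\lambda$ be moderate; the point is that the Gagliardo--Nirenberg quotient of such test functions tends to $K_{\R^+}$, giving $K_\G\ge K_{\R^+}$ and hence $\muG=\muRp$. (Alternatively, one can argue that any function concentrated on the half-line, away from $\K$, sees only $\R^+$, so the supremum defining $K_\G$ is at least $K_{\R^+}$.)

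For the statement about attainment, the upper endpoint $\mu=\muR$ and the non-attainment for $\mu\le\muRp$ and for $\mu>\muR$ are already covered: by Corollary~\ref{cor1} no ground state exists for $\mu\notin[\muG,\muR]=[\muRp,\muR]$, and by Proposition~\ref{banali}(i) the level is zero and unattained for $\mu<\muRp=\muG$; the case $\mu=\muRp$ requires ruling out attainment, which should follow because a minimizer of mass $\muRp$ with $E=0$ would, via the rearrangement inequality of Proposition~\ref{intermediate}, rearrange to a half-soliton on $\R^+$ with no loss, forcing $\G$ to be $\R^+$ — contradicting the hypothesis that $\G$ has no terminal point. The heart of the theorem is therefore the existence of a ground state for every $\mu\in(\muRp,\muR]$, and here I expect to invoke the key compactness result announced for Section~4: the strategy is to show $\elevel_\G(\mu)<0$ for such $\mu$ (which holds by Proposition~\ref{banali}(ii) since $\mu>\muRp=\muG$) and then to prove that minimizing sequences are precompact in $H^1_\mu(\G)$.

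The main obstacle is precisely this compactness. A minimizing sequence $(u_n)$ is bounded in $H^1(\G)$, so it has a weak limit $u$; the danger is vanishing or splitting of mass, i.e. part of the mass escaping to infinity along the single half-line. The plan is a dichotomy argument: write $\mu=m+(\mu-m)$ where $m=\|u\|_{L^2}^2$ is the mass retained in the limit, with the lost mass $\mu-m$ behaving like mass on $\R$ (or $\R^+$) escaping along the half-line; using the strict subadditivity-type inequality one wants $\elevel_\G(\mu)<\elevel_\G(m)+\elevel_{\R}(\mu-m)$ for every decomposition with $0<m<\mu$ — and since $\elevel_\R(\mu-m)=0$ as long as $\mu-m\le\muR$, this reduces to the strict monotonicity statement $\elevel_\G(\mu)<\elevel_\G(m)$ whenever $m<\mu$. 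Establishing this strict inequality, which should come from the fact that $\elevel_\G$ is strictly decreasing once it is negative (one can dilate a near-optimal function to lower the energy while the compact core prevents the level from collapsing to $-\infty$), is the crux; granting it, no mass can be lost, the convergence upgrades to strong convergence by the usual norm-convergence argument, and the weak limit $u$ is the desired ground state of mass $\mu$ with $E(u,\G)=\elevel_\G(\mu)<0$.
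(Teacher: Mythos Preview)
Your plan has two genuine gaps.

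\medskip

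\textbf{The test functions for $K_\G\ge K_{\R^+}$ do not work.} A half-soliton translated far out along the half-line and truncated to zero near the junction with $\K$ is, in effect, a compactly supported function on $\R$ (extend it by zero past the junction). Its Gagliardo--Nirenberg quotient is therefore bounded by $K_\R$, not $K_{\R^+}$, so your construction only recovers $K_\G\ge K_\R$, which is already Proposition~\ref{intermediate}. The point is that the half-soliton achieves $K_{\R^+}$ precisely because it does \emph{not} vanish at the endpoint of $\R^+$; once you cut it off there, the Neumann-type boundary behavior is lost. The paper's construction goes the opposite way: it takes $u_\eps(x)=\sqrt\eps\,\phi(\eps x)$ on the half-line with $\eps\to 0$ (so the half-soliton is spread out, not concentrated), keeps its maximum at the junction vertex, and extends by the \emph{constant} $\sqrt\eps$ over the compact core. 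The constant extension costs zero kinetic energy and only $O(\eps)$ mass, so the quotient tends to $K_{\R^+}$.

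\medskip

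\textbf{Boundedness of minimizing sequences is the heart of the matter, not a given.} You write ``a minimizing sequence $(u_n)$ is bounded in $H^1(\G)$'' as if this were routine, but for the $L^2$-critical problem it is the central difficulty. For $\mu>\muG$ the Gagliardo--Nirenberg inequality gives $E(u,\G)\ge \frac16(3-K_\G\mu^2)\Vert u'\Vert_{L^2}^2$ with a \emph{negative} coefficient, so it does not even rule out $\elevel_\G(\mu)=-\infty$, let alone yield an $H^1$ bound. The paper handles this via a modified Gagliardo--Nirenberg inequality (Lemma~\ref{modGN}): for any $u\in H^1_\mu(\G)$ with $\mu\le\muR$ there exists $\theta\in[0,\mu]$ such that
\[
\|u\|_{L^6(\G)}^6\le 3\Bigl(\frac{\mu-\theta}{\muR}\Bigr)^2\|u'\|_{L^2(\G)}^2+C\theta^{1/2},
\]
with $C$ depending only on $\G$. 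This inequality, whose proof uses the absence of terminal points in an essential way, is what forces $\theta$ to stay away from $0$ when $E(u,\G)\le-\alpha<0$, and then bounds $\|u'\|_{L^2}$. Once boundedness is in hand, the rest of your dichotomy outline is close in spirit to the paper's Lemmas~\ref{weakzero} and~\ref{limit}, though the paper replaces the subadditivity argument by a direct Brezis--Lieb splitting combined with the observation that the escaping part $u_n-u$ satisfies an effective Gagliardo--Nirenberg inequality with constant $K_\R$ (since it lives essentially on the half-lines). Your treatment of non-attainment at $\mu=\muRp$ is correct and matches the paper.
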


The last theorem deals with the remaining cases, under the additional hypothesis $\mu_\G < \mu_\R$. Notice that, for such graphs,
the interval of masses where a ground state exists, is closed.

\begin{theorem}
\label{quartocaso}
Let $\G$ be a noncompact metric graph with no terminal point, having at least two half-lines and admitting no cycle covering.
If $\mu_\G <\mu_\R$, then for every $\mu\in [\muG,\muR]$, the infimum $\elevel_\G(\mu)$ is attained.
\end{theorem}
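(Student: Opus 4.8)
The plan is to establish existence of a minimizer for problem \eqref{minprob} at every $\mu\in[\muG,\muR]$ by the direct method, the crux being to recover compactness of minimizing sequences despite the dilation-type loss of compactness intrinsic to the critical exponent. First I would fix $\mu\in(\muG,\muR]$ and take a minimizing sequence $(u_n)\subset H^1_\mu(\G)$ for $\elevel_\G(\mu)$, which by Proposition~\ref{banali}(ii) has $\elevel_\G(\mu)<0$. Since $\mu<\muR$ (here the hypothesis $\muG<\muR$ is used to guarantee that the \emph{whole} interval $[\muG,\muR]$ lies strictly below the critical mass of $\R$ except at the right endpoint), the Gagliardo–Nirenberg inequality \eqref{GN} with the \emph{graph} constant $K_\G$ — and, more importantly, a \emph{strict} improvement of it coming from the analysis in Section~4 (the ``core'' result referred to in the introduction) — yields an a priori bound on $\|u_n'\|_{L^2(\G)}$ and hence $H^1$-boundedness. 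Up to subsequences, $u_n\rightharpoonup u$ weakly in $H^1(\G)$ and strongly in $L^\infty_{loc}$; set $m:=\|u\|_{L^2(\G)}^2\le\mu$. The goal is to prove $m=\mu$.

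The heart of the argument is a dichotomy/concentration-compactness analysis tailored to metric graphs. The mass $\mu-m$ that is potentially lost at infinity can only escape along the half-lines of $\G$; on each half-line the tail of $u_n$ behaves like a function on $\R^+$ (or, after possibly recombining two half-lines through the graph, like a function on $\R$). Thus I would compare $\elevel_\G(\mu)$ with $\elevel_\G(m)+\elevel_{\R^+}(\mu-m)$ or $\elevel_\G(m)+\elevel_\R(\mu-m)$: by Proposition~\ref{banali}, if $\mu-m>\muRp$ the escaping part would contribute $-\infty$, contradicting finiteness of $\elevel_\G(\mu)$ (which follows from the strict GN bound); while if $0<\mu-m\le\muRp$ the escaping part contributes $0$ energy, so we would get $\elevel_\G(\mu)\ge\elevel_\G(m)$. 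Here one needs the \emph{strict subadditivity} $\elevel_\G(\mu)<\elevel_\G(m)$ for $m<\mu$ — this is where the compact core of $\G$ (no tip, at least two half-lines, no cycle covering) enters, together with the strict inequality $K_\G>K_\R$ that the hypothesis $\muG<\muR$ encodes, allowing a genuine gain of negative energy by keeping all the mass on $\G$. This rules out $0<m<\mu$ and, separately, $m=0$ (which would force $\elevel_\G(\mu)\ge 0$, again contradicting negativity). Therefore $m=\mu$, the convergence is strong in $L^2$, hence in $L^6$ by interpolation, and weak lower semicontinuity of the kinetic term gives $E(u,\G)\le\elevel_\G(\mu)$, so $u$ is a minimizer.

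For the right endpoint $\mu=\muR$ the same scheme applies, but the comparison with $\R$ is now borderline ($\elevel_\R(\muR)=0$), so I would instead exploit that along any minimizing sequence with energy approaching the (strictly negative) value $\elevel_\G(\muR)$, no mass can escape to a half-line without paying a strictly positive energy cost relative to the soliton profile — essentially because a graph admitting no cycle covering and with a nontrivial compact core cannot support a full soliton, so the strict inequality $\elevel_\G(\muR)<0=\elevel_\R(\muR)$ persists and strict subadditivity still forbids splitting. For the \emph{left endpoint} $\mu=\muG$, which is genuinely delicate, one has $\elevel_\G(\muG)=0$ and both compact and vanishing minimizing sequences compete; here I would argue that attainment of the best constant $K_\G$ in \eqref{GN} by some $w\in H^1(\G)$ — which I expect to be proved in Section~4 as part of the ``core'' result, precisely under the structural hypotheses of case \casoIV{} — immediately produces a minimizer at mass $\muG$ after rescaling $w$ to have mass $\muG$, since equality in GN forces $E=0$ and $\|w\|_{L^2}^2=\muG$ is the equality case of \eqref{elevelpos}.

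\textbf{Main obstacle.} The decisive and hardest step is the \emph{strict} version of the Gagliardo–Nirenberg inequality and the attainment of $K_\G$ on graphs of type \casoIV{} with $\muG<\muR$: everything downstream (finiteness of $\elevel_\G(\mu)$, strict subadditivity, exclusion of vanishing and of splitting, and the left-endpoint case) rests on having a quantitative gain over the real-line inequality that survives the passage to the limit. I expect this to be exactly the content of Section~4, and the role of the present theorem's proof (Section~5) is to convert that analytic input, via a careful graph-adapted concentration-compactness argument, into existence of minimizers on the closed interval $[\muG,\muR]$.
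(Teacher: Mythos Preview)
For $\mu\in(\muG,\muR]$ your outline is essentially the paper's argument (Proposition~\ref{mainprop}): the modified Gagliardo--Nirenberg inequality of Lemma~\ref{modGN} supplies the $H^1$ bound, Lemma~\ref{weakzero} shows the escaping part has nonnegative energy (the right comparison is with $\R$, not $\R^+$, since the tails vanish at the attaching vertex and extend by zero to the whole line; your case ``$\mu-m>\muRp$'' never arises), and your ``strict subadditivity'' is exactly the mass-scaling step in Lemma~\ref{limit}. Note also that $\mu=\muR$ needs no separate treatment: $\elevel_\G(\muR)<0$ already follows from $\muG<\muR$ via Proposition~\ref{banali}(ii), so Proposition~\ref{mainprop} covers it uniformly.

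The genuine gap is at $\mu=\muG$. You defer attainment of $K_\G$ to ``Section~4'', but the modified G--N inequality proved there gives only an \emph{upper} bound on $\|u_n'\|_{L^2}$ along a maximizing sequence for \eqref{defKG}. What is missing---and what the paper supplies separately as Lemma~\ref{bridge}---is a mechanism forcing $\|u_n'\|_{L^2}$ to stay \emph{bounded away from zero}. This is not automatic: at mass $\muG$ every sequence with $\|u_n'\|_{L^2}\to 0$ is minimizing for $E$, so vanishing must be excluded by hand. The paper's device is ``bridge doubling'': each edge not lying on any cycle is stretched and duplicated, producing an auxiliary graph $\widetilde\G$ that \emph{does} admit a cycle covering, hence has $\mu_{\widetilde\G}=\muR$; transporting $u$ to $\widetilde\G$ and applying the sharp G--N inequality there yields \eqref{brdoub}. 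The hypothesis ``at least two half-lines'' guarantees that every half-line lies on an unbounded cycle, so the bridge set $\B$ is compact; then $\|u_n'\|_{L^2}\to 0$ forces $\mu_n=\int_\B|u_n|^2\to 0$, and \eqref{brdoub} caps the G--N quotient by $3\muG^2/\muR^2<3$, contradicting \eqref{uptoK}. Without this step your endpoint argument is circular, since by Remark~\ref{remGN} attainment of $K_\G$ is \emph{equivalent} to the existence of a ground state at $\muG$.
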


The class of graphs satisfying the assumptions of Theorem \ref{quartocaso} is not empty, since it contains, for instance, the ``signpost" graph $\G$ of Fig. \ref{figD}. To see this we take a 
soliton $\phi$ (necessarily of mass $\muR$) as defined in \eqref{solitoni} and we apply the procedure detailed in Sec. 3 of \cite{ast2}. This produces
a function $u \in H^1_{\mu_\R} (\G)$ such that
$E(u, \G) < E (\phi, \R) =0$. Therefore, $\elevel_\G(\muR) < 0$ and,  by Proposition \ref{banali},
we conclude that $\muG <\muR$.
\medskip

\begin{remark}\label{remGN}
The last four theorems also provide an answer to the question of
existence of extremal functions for the Gagliardo-Nirenberg inequality
\eqref{GN}. The key observation is that,
for any noncompact graph $\G$, the following two conditions are in fact equivalent:
\begin{itemize}
\item[(i)] there exists $u\in H^1(\G)$, $u\not\equiv 0$, achieving equality in \eqref{GN};
\item[(ii)] the infimum $\elevel_\G(\muG)$ is attained by a ground state of mass $\muG$.
\end{itemize}
Indeed, by homogeneity, a function extremal for \eqref{GN} can be supposed to have
mass $\muG$: on the other hand, since $K_\G\,\muG^2=3$,
optimality in \eqref{GN} combined with $\Vert u\Vert_{L^2}^2=\muG$
is equivalent to $E(u,\G)=0$, i.e. to $u$ being a ground state, by Proposition~\ref{banali}.
\end{remark}

We end this section with the short proofs of the first two theorems. The last
two theorems are, on the contrary, much more involved, and to their
proofs are devoted the last sections of the paper.

\begin{proof}[Proof of Theorem \ref{teobaffo}] Let $\mu >\muRp$. For every $\eps >0$, there exists $u \in H_\mu^1(\R^+)$ with compact support such that
\begin{equation}
\label{nearly}
\frac {\| u\|_{L^6(\R^+)}^6} {\mu^2\cdot\| u'\|_{L^2(\R^+)}^2} \ge K_{\R^+} -\eps.
\end{equation}
Replacing $u$ by $u_\lambda(x) = \sqrt \lambda u(\lambda x)$ with $\lambda$ large,
we can assume that the support of $u_\lambda$ is contained in an interval shorter than a terminal edge of $\G$.
Then the function $u_\lambda$  can be seen as an element of $H_\mu^1(\G)$ (place the support of $u$ on a terminal edge
of $\G$, and set $u_\lambda\equiv 0$ elsewhere on $\G$). Consequently, the quotient in the
preceding inequality does not exceed $K_\G$. Thus,
for every $\eps >0$, $K_\G \ge K_{\R^+} - \eps$ and, recalling \eqref{interK},
we obtain $K_\G = K_{\R^+}$, that is, $\muG = \muRp$.

To prove that
$\elevel_\G(\mu)= -\infty$, notice that  since $K_\G = K_{\R^+}$, \eqref{nearly}
readily implies, for $\eps$ small, that  $E(u_\lambda,\G) <0$ (as in \eqref{dopo}).
Therefore
\[
E(u_\lambda,\G) = E(u_\lambda,\R^+) = \lambda^2 E(u,\R^+) \to -\infty
\]
as $\lambda \to +\infty$. Thus, $\elevel_\G(\mu) = -\infty$.

Finally,  let  $\mu = \muRp$.
If $\G=\R^+$, plainly $\elevel_{\R^+}(\mu_{\R^+})$ is attained by the half-solitons. Conversely,
assume that there exists $u \in H^1_{\mu_{\R^+}}({\G})$ such that $E(u,\G) = \elevel_\G(\muRp) = 0$. Its decreasing rearrangement $u^*$ is in $H^1_ {\mu_{\R^+}}({\R^+})$ and satisfies $E(u^*,\R^+) \le E(u,\G) = 0$. Since
$\elevel_{\R^+}(\muRp) = 0$, the function $u^*$ must be a half-soliton. From $E(u,\G) = E(u^*,\R^+)$ we deduce (via Proposition 3.1 of \cite{ast}) that
almost every point in the range of $u$ has exactly one preimage.
As $u$ and $u^*$ are equimeasurable, it follows that $u$ is injective, which means that $u$ is supported on a subset of a half-line.
But as  $E(u,\G) = 0$, also $u$ must be a half-soliton and this is possible only if there are no vertices other than the ends of the half-line, i.e., $\G$ is (isometric to) $\R^+$.
\end{proof}

\begin{proof}[Proof of Theorem \ref{teoH}] Take any nonnegative $v\in H^1(\G)\setminus \{0\}$,
and let $\widehat v\in H^1(\R)$ denote its symmetric rearrangement on $\R$.
Since $\G$ admits a cycle covering (namely, it satisfies assumption (H)), we have (see Proposition 3.1 in \cite{ast})
\[
\| \widehat{v}'\|_{L^2(\R)} \le  \| v'\|_{L^2(\G)}
\]
and therefore
\[
\frac {\| v\|_{L^6(\G)}^6} {\| v\|_{L^2(\G)}^4 \cdot \| v'\|_{L^2(\G)}^2} \le
\frac {\|\widehat{v}\|_{L^6(\R)}^6} {\|\widehat{v}\|_{L^2(\R)}^4 \cdot \| \widehat{v}'\|_{L^2(\R)}^2}
\le K_\R,
\]
showing that $K_\G \le K_\R$. By \eqref{interK} we obtain $K_\G = K_\R$, namely $\mu_\G = \muR$.

If $\mu\ne \muR =\muG$, Proposition \ref{banali} shows that $\elevel_\G(\mu)$ is not attained.

If  $\mu = \muR$, a soliton can be placed on $\R$ or on any tower of bubbles (see \cite{ast}), showing that those graphs
carry a ground state. Conversely, assume that there exists $u \in H^1_{\muR}({\G})$ such that $E(u,\G) = \elevel_\G(\muR) = 0$. Since $\G$ satisfies assumption (H), the symmetric rearrangement $\widehat u$ is in $H^1_ {\muR}(\R)$ and satisfies $E(\widehat u,\R) \le E(u,\G) = 0$. Since
$\elevel_{\R}(\muR) = 0$, the function $\widehat u$ is a soliton.
Furthermore, since $E(u,\G) = E(\widehat u, \R)$, almost every point in the range of $u$ has exactly two preimages.
These features show, as in Theorem 2.5 of \cite{ast}, that $\G$ must be one of the special graphs of \cite{ast}.
\end{proof}

\section{A general existence argument}

We now present a general existence result for ground states, which is the
common core of the proofs of Theorems~\ref{teounasemi} and \ref{quartocaso},
as long as $\muG<\mu\leq\muR$.

\begin{proposition}
\label{mainprop}
Let $\G$ be a noncompact metric graph with no terminal point,
such that $\mu_\G <\mu_\R$. For every $\mu \in (\muG,\muR]$ the infimum $\elevel_\G(\mu)$  is attained.
\end{proposition}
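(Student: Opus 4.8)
The plan is to apply the direct method to the functional $E(\cdot,\G)$ on $H^1_\mu(\G)$, using nonnegative functions throughout. Since $\mu>\muG$, Proposition~\ref{banali}(ii) gives $\elevel_\G(\mu)<0$, and the argument will simultaneously show $\elevel_\G(\mu)>-\infty$. Fix a minimizing sequence $(u_n)\subset H^1_\mu(\G)$, $u_n\ge0$. The two sources of noncompactness are \emph{concentration} ($\|u_n'\|_{L^2(\G)}\to\infty$, a bubble forming at a point of $\G$) and \emph{escape} of part of the mass to infinity along a half-line; the hypotheses $\mu\le\muR$ and $\muG<\muR$ are used, respectively, to cap the mass a single bubble may carry and to force $\elevel_\G(\muR)<0$.

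\textbf{Step 1: no concentration, and $\elevel_\G(\mu)>-\infty$.} Suppose $\|u_n'\|_{L^2(\G)}\to\infty$. From $E(u_n,\G)\to\elevel_\G(\mu)<0$ one has $\frac16\|u_n\|_{L^6(\G)}^6=\frac12\|u_n'\|_{L^2(\G)}^2-E(u_n,\G)\ge\frac12\|u_n'\|_{L^2(\G)}^2+c$ for some $c>0$, so $\|u_n\|_{L^6(\G)}\to\infty$ as well; a sequence of fixed mass with diverging $L^6$-norm and quasi-minimal energy must concentrate. Rescaling by the critical transformation underlying \eqref{scalE} around a point where $|u_n|$ is maximal, one extracts a nontrivial bubble living on the \emph{tangent graph} $\Gamma_0$ at that point. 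Since $\G$ has no terminal point, $\Gamma_0$ is a copy of $\R$ (point interior to an edge, or vertex of degree two) or a Kirchhoff star $S_d$ with $d\ge3$ half-lines; a computation of optimal constants --- of the type advertised in the Introduction --- shows $K_{S_d}=K_\R$, hence $\mu_{\Gamma_0}=\muR$ in every case. The bubble carries a mass $m_0\le\mu\le\muR=\mu_{\Gamma_0}$. If $m_0<\mu_{\Gamma_0}$ then, by \eqref{Epos} applied on $\Gamma_0$, the bubble has strictly positive limiting energy, and since its kinetic part diverges its energy contribution tends to $+\infty$, forcing $E(u_n,\G)\to+\infty$, a contradiction; the same argument, applied to a sequence with $E(u_n,\G)\to-\infty$, also proves $\elevel_\G(\mu)>-\infty$. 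The remaining possibility is $m_0=\muR$, which forces $\mu=\muR$ and the bubble to exhaust all the mass while being asymptotically a soliton; a careful expansion of $E(u_n,\G)$ near this soliton --- the technical heart of this section --- shows $\liminf_nE(u_n,\G)\ge0$, so $\elevel_\G(\muR)\ge0$, contradicting $\elevel_\G(\muR)<0$ (which holds precisely because $\muG<\muR$). Hence $\|u_n'\|_{L^2(\G)}$ is bounded, so $(u_n)$ is bounded in $H^1(\G)$.

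\textbf{Step 2: no vanishing, no loss of mass.} Up to subsequences $u_n\rightharpoonup u$ in $H^1(\G)$, with $u_n\to u$ in $L^p_{\mathrm{loc}}(\G)$, $2<p\le6$. As $\|u_n'\|_{L^2(\G)}$ is bounded, any escape of mass occurs only along the half-lines, and a profile decomposition holds: $u_n$ is $u$ plus bumps travelling to infinity along the half-lines plus a remainder vanishing in $L^6(\G)$. Transporting each bump back along its half-line, the finite endpoint recedes, so the corresponding profile $w_k$ belongs to $H^1(\R)$ and has mass $m_k\le\mu\le\muR$, whence $E(w_k,\R)\ge\elevel_\R(m_k)=0$ by \eqref{Rlevel}. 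By the Brezis--Lieb lemma for the $L^6$ term (the kinetic term splitting by asymptotic orthogonality), $\elevel_\G(\mu)=\lim_nE(u_n,\G)\ge E(u,\G)+\sum_kE(w_k,\R)\ge E(u,\G)$. If $u\equiv0$ this reads $\elevel_\G(\mu)\ge0$, impossible; hence $u\not\equiv0$ and, with $m:=\|u\|_{L^2(\G)}^2\in(0,\mu]$, $\elevel_\G(\mu)\ge E(u,\G)\ge\elevel_\G(m)$. Now $\elevel_\G$ is strictly decreasing wherever it is negative: testing with $\sqrt t\,v$ gives $E(\sqrt t\,v,\G)\le t^2E(v,\G)$, hence $\elevel_\G(t\nu)\le t^2\elevel_\G(\nu)<\elevel_\G(\nu)$ for $t>1$ whenever $\elevel_\G(\nu)<0$; combined with $\elevel_\G\equiv0$ on $[0,\muG]$ (Proposition~\ref{banali}(i)) and $\mu>\muG$, this gives $\elevel_\G(a)>\elevel_\G(\mu)$ for all $a<\mu$. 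Therefore $m=\mu$.

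\textbf{Step 3: conclusion.} From $\|u\|_{L^2(\G)}^2=\mu=\lim_n\|u_n\|_{L^2(\G)}^2$ and $u_n\rightharpoonup u$ we get $u_n\to u$ in $L^2(\G)$, hence, applying \eqref{GN} to $u_n-u$ together with the $H^1$-bound, $u_n\to u$ in $L^6(\G)$; thus $\|u\|_{L^6(\G)}^6=\lim_n\|u_n\|_{L^6(\G)}^6$, while $\|u'\|_{L^2(\G)}^2\le\liminf_n\|u_n'\|_{L^2(\G)}^2$ by weak lower semicontinuity. Hence $E(u,\G)\le\liminf_nE(u_n,\G)=\elevel_\G(\mu)$, and since $u\in H^1_\mu(\G)$ we conclude $E(u,\G)=\elevel_\G(\mu)$, i.e.\ $u$ is a ground state of mass $\mu$. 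The main obstacle is Step~1 --- carrying out the blow-up, identifying the tangent graphs and their critical mass ($K_{S_d}=K_\R$), and, above all, the energy expansion near the soliton in the borderline case $\mu=\muR$ --- and it is there, jointly with the strict monotonicity of $\elevel_\G$ used in Step~2, that the two standing hypotheses $\mu\le\muR$ and $\muG<\muR$ are indispensable.
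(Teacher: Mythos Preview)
Your Steps 2 and 3 are essentially the paper's route: the paper's Lemma~\ref{weakzero} says that a weakly null sequence asymptotically obeys the Gagliardo--Nirenberg inequality with the \emph{real-line} constant $K_\R$ (morally your ``bumps on $\R$'' picture), and Lemma~\ref{limit} is exactly your strict-monotonicity argument that forces $m=\mu$ once $u\not\equiv 0$. One small slip: your scaling formula $\elevel_\G(t\nu)\le t^2\elevel_\G(\nu)$ is wrong (testing with $\sqrt t\,v$ gives $tE-$ an extra negative term, not $t^2E$), but the paper's version of the same inequality in the proof of Lemma~\ref{limit} repairs this.

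The genuine gap is Step 1. The paper does \emph{not} use blow-up at all: boundedness of minimizing sequences is obtained from the modified Gagliardo--Nirenberg inequality of Lemma~\ref{modGN},
\[
\|u\|_{L^6(\G)}^6\le 3\Bigl(\tfrac{\mu-\theta}{\muR}\Bigr)^2\|u'\|_{L^2(\G)}^2+C\theta^{1/2},
\]
proved by a delicate rearrangement/surgery construction that exploits the absence of terminal points. Plugging this into $E(u,\G)\le -\alpha<0$ shows first that $\theta$ is bounded away from $0$, and then that $\|u'\|_{L^2}$ is bounded. Your blow-up outline, by contrast, leaves the two hard points unproved. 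First, when $m_0<\muR$ you assert the bubble's positive energy forces $E(u_n,\G)\to+\infty$, but the remainder after removing one bubble lives on the full (rescaled) graph, where the operative constant is $K_\G>K_\R$; its energy could be very negative, so one cannot conclude without a full profile decomposition \emph{and} a lower bound on the dispersed part --- which is again Lemma~\ref{weakzero} in disguise. Second, and more seriously, in the borderline case $m_0=\mu=\muR$ you invoke ``a careful expansion near the soliton'' showing $\liminf E(u_n,\G)\ge 0$, but in the $L^2$-critical regime the bubble's energy is $\lambda_n^2\,E(w_n)$ with $E(w_n)\to E(w)=0$, so $\lambda_n^2\cdot o(1)$ has no a priori sign; getting $\liminf\ge 0$ here is \emph{exactly} the content of Lemma~\ref{modGN} (which gives $E(u_n)\ge -C\theta_n^{1/2}/6\to 0$ when the mass concentrates). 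In short, the step you flag as ``the technical heart'' is not a routine expansion but is equivalent in strength to the paper's key lemma, and your sketch does not indicate how to obtain it by blow-up methods alone.
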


The proof is quite involved, and will rely on the following three lemmas.

\begin{lemma}
\label{weakzero}
Assume $\G$ is noncompact and
let $v_n\in H^1(\G)$ be a sequence of  functions such that
$v_n \rightharpoonup 0$ in $H^1(\G)$.  
Then, as $n \to \infty$,
\begin{equation}
\label{weaklevel}
E(v_n,\G) \ge  \frac12\left( 1 - \frac{\|v_n\|_{L^2(\G)}^4}{\mu_\R^2}\right)\|v_n'\|_{L^2(\G)}^2 +o(1).
\end{equation}
\end{lemma}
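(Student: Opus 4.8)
\textbf{Proof proposal for Lemma~\ref{weakzero}.}

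The plan is to exploit the fact that a sequence converging weakly to zero in $H^1(\G)$ ``leaks out to infinity'', so that on any fixed compact portion of the graph the $L^6$ mass becomes negligible, while on the half-lines the function is essentially a function on $\R$ (or on a union of half-lines), where the sharp Gagliardo--Nirenberg inequality has the best constant $K_\R = 3/\muR^2$. Concretely, I would first fix a compact ``core'' $\K \subset \G$ large enough to contain every bounded edge, so that $\G \setminus \K$ is a finite disjoint union of half-lines. By Rellich's theorem, $v_n \rightharpoonup 0$ in $H^1(\G)$ forces $v_n \to 0$ strongly in $L^\infty(\K)$ (each bounded edge embeds compactly $H^1 \hookrightarrow C$), hence $\|v_n\|_{L^6(\K)}^6 \le \|v_n\|_{L^\infty(\K)}^4 \|v_n\|_{L^2(\K)}^2 \to 0$; so the whole $L^6$ contribution concentrates on the half-lines and
\[
\|v_n\|_{L^6(\G)}^6 = \|v_n\|_{L^6(\G\setminus\K)}^6 + o(1).
\]

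The second step is to estimate $\|v_n\|_{L^6}^6$ on the half-lines. On a single half-line $\R^+$, a function $w \in H^1(\R^+)$ can be reflected evenly to a function $\widetilde w \in H^1(\R)$ with $\|\widetilde w\|_{L^p(\R)}^p = 2\|w\|_{L^p(\R^+)}^p$ for all $p$ and $\|\widetilde w'\|_{L^2(\R)}^2 = 2\|w'\|_{L^2(\R^+)}^2$; applying the sharp inequality $\|\widetilde w\|_{L^6(\R)}^6 \le K_\R \|\widetilde w\|_{L^2(\R)}^4 \|\widetilde w'\|_{L^2(\R)}^2$ and unwinding the factors of $2$ gives $\|w\|_{L^6(\R^+)}^6 \le 2 K_\R \|w\|_{L^2(\R^+)}^4 \|w'\|_{L^2(\R^+)}^2$, i.e. $K_{\R^+} = 2K_\R = 6/\muR^2$. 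However, a cruder superadditivity bound is what I actually need: summing the inequality $\|w\|_{L^6}^6 \le K_{\R^+}\|w\|_{L^2}^4\|w'\|_{L^2}^2$ over the (finitely many) half-lines $H_j$ of $\G\setminus\K$ and using $\sum_j a_j b_j \le (\sum_j a_j)(\sum_j b_j)$ for nonnegative reals, I get
\[
\|v_n\|_{L^6(\G\setminus\K)}^6 \;\le\; K_{\R^+} \Bigl(\sum_j \|v_n\|_{L^2(H_j)}^2\Bigr)^2 \Bigl(\sum_j \|v_n'\|_{L^2(H_j)}^2\Bigr) \;\le\; K_{\R^+}\,\|v_n\|_{L^2(\G)}^4\,\|v_n'\|_{L^2(\G)}^2.
\]
This is still off by a factor of $2$ from what the lemma wants. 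The point that rescues the argument — and that I expect to be the crux — is that on each half-line the ``boundary value'' $v_n(0)$ at the junction with $\K$ tends to zero (again by the compact embedding on $\K$), so the even-reflection trick can be replaced by a gluing argument: extend $v_n|_{H_j}$ across the vertex by a short interpolation to $0$, at a cost that vanishes as $n\to\infty$, obtaining a compactly supported function on a full line $\R$ with the same $L^6$ norm up to $o(1)$, same $L^2$ norm up to $o(1)$, and derivative norm increased by only $o(1)$; then the sharp constant $K_\R = 3/\muR^2$ — not $K_{\R^+}$ — applies to each half-line separately. Carrying the $o(1)$ corrections through (using Young's inequality to absorb cross terms like $v_n(0)\|v_n'\|_{L^2}$, which is legitimate since $v_n(0)\to 0$ while $\|v_n'\|_{L^2}$ stays bounded) yields
\[
\|v_n\|_{L^6(\G)}^6 \;\le\; K_\R\,\|v_n\|_{L^2(\G)}^4\,\|v_n'\|_{L^2(\G)}^2 + o(1) \;=\; \frac{3}{\muR^2}\,\|v_n\|_{L^2(\G)}^4\,\|v_n'\|_{L^2(\G)}^2 + o(1).
\]

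The final step is bookkeeping: plugging this into $E(v_n,\G) = \tfrac12\|v_n'\|_{L^2(\G)}^2 - \tfrac16\|v_n\|_{L^6(\G)}^6$ gives
\[
E(v_n,\G) \;\ge\; \frac12\|v_n'\|_{L^2(\G)}^2 - \frac{1}{2\muR^2}\|v_n\|_{L^2(\G)}^4\|v_n'\|_{L^2(\G)}^2 + o(1) \;=\; \frac12\Bigl(1 - \frac{\|v_n\|_{L^2(\G)}^4}{\muR^2}\Bigr)\|v_n'\|_{L^2(\G)}^2 + o(1),
\]
which is exactly \eqref{weaklevel}. The main obstacle is the factor-of-two issue described above: a naive application of the half-line Gagliardo--Nirenberg constant is too lossy, and one genuinely needs the vanishing of the trace $v_n(0)$ on each half-line (a consequence of weak convergence plus compactness on the core) in order to upgrade $K_{\R^+}$ to $K_\R$. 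A secondary technical point is to handle the $o(1)$ uniformly in $n$ despite $\|v_n'\|_{L^2}$ not necessarily being bounded — but $H^1$ weak convergence does bound it, so all the error terms are genuinely $o(1)$, possibly after passing to a subsequence, which suffices since the claim is an asymptotic inequality.
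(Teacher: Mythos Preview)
Your proposal is correct and follows essentially the same strategy as the paper: localize to the compact core $\K$, use weak convergence plus compact embedding to make $v_n$ small there, and then on each half-line exploit the vanishing boundary value to recover the sharp constant $K_\R$ rather than $K_{\R^+}$.

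The one noteworthy technical difference is in how the ``upgrade from $K_{\R^+}$ to $K_\R$'' is implemented. You propose to extend $v_n|_{H_j}$ to a function on $\R$ by a short linear interpolation from $v_n(0)$ to $0$, and then track the resulting $o(1)$ errors in all three norms. The paper instead defines the truncation
\[
w_n(x) = \mathrm{sgn}(v_n(x))\bigl(|v_n(x)| - \eps_n\bigr)_+,\qquad \eps_n := \max_{\K}|v_n|,
\]
which vanishes identically on $\K$ (hence $w_n(0)=0$ on each half-line, so the zero-extension to $\R$ is free) and, being a $1$-Lipschitz function of $v_n$, automatically satisfies $\|w_n'\|_{L^2}\le\|v_n'\|_{L^2}$ and $\|w_n\|_{L^2}\le\|v_n\|_{L^2}$ with no error terms; the only $o(1)$ correction enters through $\|v_n-w_n\|_{L^6}\to 0$. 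This truncation trick is slightly cleaner than your interpolation, since it concentrates all the bookkeeping into a single $L^6$ error, but both routes are valid and equally elementary.
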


\begin{proof}
Let $\K$ be the compact core of $\G$ (i.e., the compact metric graph obtained from
$\G$ by removing the interior of every halfline)  and set
\[
\eps_n := \max_{x\in \K} |v_n(x)|,
\qquad
w_n(x) = \begin{cases}
v_n(x)+\eps_n &\text{if $v_n(x)\leq-\eps_n$}\\
0&\text{if $|v_n(x)|<\eps_n$}\\
v_n(x)-\eps_n &\text{if $v_n(x)\geq\eps_n$}
\end{cases}
\]
Since $v_n \rightharpoonup 0$ in $H^1(\G)$, we have $v_n \to 0$ in $L^{\infty}_{\rm loc}(\G)$
and hence $\eps_n\to 0$. Moreover, as  $\Vert v_n-w_n\Vert_{L^\infty}\leq\eps_n$, from
equiboundedness in $L^2(\G)$ we have
\begin{equation}
\label{cof}
\|v_n -w_n\|_{L^6(\G)} \to 0.
\end{equation}
Now let $\HH$ be an arbitrary halfline of $\G$. By choosing a coordinate $t\geq 0$ on it,
we can identify $\HH$ with $[0,+\infty)$ and,
since $w_n\equiv 0$ on $\K$ and $\HH$ is attached to $\K$ at $t=0$,
we have $w_n(0)=0$. Setting $w_n(t)\equiv 0$ for $t<0$, the restriction of $w_n$ to
$\HH$ can be seen as a function in $H^1(\R)$ and, as such, it must obey the following
Gagliardo-Nirenberg iequality:
\begin{equation*}
\|w_n\|_{L^6(\HH)}^6
\le 3 \frac{\| w_n\|_{L^2(\HH)}^4}{\mu_\R^2}
\| w_n'\|_{L^2(\HH)}^2
\leq
3 \frac{\| w_n\|_{L^2(\G)}^4}{\mu_\R^2}
\| w_n'\|_{L^2(\HH)}^2.
\end{equation*}
Summing over all the halflines of $\G$, since $w_n\equiv 0$ on $\K$ we obtain
 \begin{equation*}
\|w_n\|_{L^6(\G)}^6
\le 3 \frac{\| w_n\|_{L^2(\G)}^4}{\mu_\R^2}
\| w_n'\|_{L^2(\G)}^2
\end{equation*}
(just as it would be if we had $\muG=\muR$).
Since $\| w_n'\|_{L^2(\G)}^2 \le \| v_n'\|_{L^2(\G)}^2$
and $\| w_n\|_{L^2(\G)}^2 \le \| v_n\|_{L^2(\G)}^2$, from \eqref{cof} and the
previous inequality we obtain,
\[
\|v_n\|_{L^6(\G)}^6 +o(1) = \|w_n\|_{L^6(\G)}^6 \le  3 \frac{\|v_n\|_{L^2(\G)}^4 }{\mu_\R^2}
\| v_n'\|_{L^2(\G)}^2,
\]
and \eqref{weaklevel} follows immediately from the definition of $E(v_n,\G)$.
\end{proof}

\begin{lemma}
\label{limit}
Let $\G$ be a noncompact graph,
and take $\mu \in [\mu_\G, \mu_\R]$.
be a nonnegative minimizing sequence for $E(\, \cdot\,,\G)$ such that
$
u_n \rightharpoonup u\quad\text {in } H^1(\G)
$, for some $u\in  H^1(\G)$.

If $u\not\equiv 0$, then $u\in H^1_\mu(\G)$ and $u$ is a minimizer.
\end{lemma}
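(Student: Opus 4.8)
\textbf{Proof proposal for Lemma~\ref{limit}.}

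The plan is to exploit the weak lower semicontinuity of the kinetic term together with Brezis--Lieb-type splitting of the $L^6$ and $L^2$ norms, and then feed the ``vanishing part'' into Lemma~\ref{weakzero}. Write $v_n := u_n - u$, so that $v_n \rightharpoonup 0$ in $H^1(\G)$. By Brezis--Lieb (valid on $\G$ since it reduces to the statement on each edge), $\|u_n\|_{L^2(\G)}^2 = \|u\|_{L^2(\G)}^2 + \|v_n\|_{L^2(\G)}^2 + o(1)$ and $\|u_n\|_{L^6(\G)}^6 = \|u\|_{L^6(\G)}^6 + \|v_n\|_{L^6(\G)}^6 + o(1)$; moreover, since $u_n' \rightharpoonup u'$ weakly in $L^2(\G)$, expanding the square gives $\|u_n'\|_{L^2(\G)}^2 = \|u'\|_{L^2(\G)}^2 + \|v_n'\|_{L^2(\G)}^2 + o(1)$ as well (the cross term $\int_\G u' v_n'$ vanishes in the limit). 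Consequently
\[
E(u_n,\G) = E(u,\G) + E(v_n,\G) + o(1).
\]
Set $m := \|u\|_{L^2(\G)}^2 \in (0,\mu]$ and $\alpha_n := \|v_n\|_{L^2(\G)}^2$, so $m + \alpha_n = \mu + o(1)$; in particular $\alpha_n \le \mu - m + o(1) < \mu \le \muR$, so eventually $\alpha_n^2/\muR^2 \le 1$. Applying Lemma~\ref{weakzero} to the sequence $v_n$,
\[
E(v_n,\G) \ge \tfrac12\Bigl(1 - \tfrac{\alpha_n^2}{\muR^2}\Bigr)\|v_n'\|_{L^2(\G)}^2 + o(1) \ge o(1),
\]
the last inequality because the prefactor is nonnegative for $n$ large. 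Therefore $E(u_n,\G) \ge E(u,\G) + o(1)$, and letting $n \to \infty$ gives $\elevel_\G(\mu) \ge E(u,\G)$.

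It remains to show $\|u\|_{L^2(\G)}^2 = \mu$ (which then forces $E(u,\G) = \elevel_\G(\mu)$, i.e.\ $u$ is a minimizer). Suppose for contradiction that $m < \mu$. Two subcases: if $\|v_n'\|_{L^2(\G)} \to 0$ along a subsequence, then also $\|v_n\|_{L^6(\G)} \to 0$ (by the Gagliardo--Nirenberg inequality \eqref{GN} on $\G$, since $\|v_n\|_{L^2}$ is bounded), so $\|v_n\|_{H^1}$ would need a separate argument --- but actually in this case we rescale $u$: the function $\tilde u := \sqrt{\mu/m}\,u$ has mass $\mu$ and, since $m < \mu$ and the potential term dominates under mass-increasing dilation of a fixed profile, one checks $E(\tilde u, \G) < E(u,\G)$ unless $u \equiv 0$ --- wait, more carefully, one uses that multiplying $u$ by a constant $c>1$ scales $E$ as $\tfrac{c^2}{2}\|u'\|^2 - \tfrac{c^6}{6}\|u\|_{L^6}^6 < c^2 E(u,\G) \le c^2 \elevel_\G(\mu)$; combined with $E(u,\G)\le\elevel_\G(\mu)\le 0$ (recall $\mu \ge \muG$) this gives $E(\tilde u,\G) < \elevel_\G(\mu)$ whenever $\elevel_\G(\mu) < 0$, a contradiction. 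When $\elevel_\G(\mu) = 0$ (i.e.\ $\mu = \muG$), one argues instead that $E(u,\G) = 0$ with $\|u\|_{L^2}^2 = m < \muG$ contradicts \eqref{Epos}. The genuinely delicate case is when $\liminf \|v_n'\|_{L^2(\G)} > 0$: then the strict inequality $\muG < \muR$ is \emph{not} available here, so one must use that $\alpha_n < \mu \le \muR$ makes the prefactor in Lemma~\ref{weakzero} \emph{strictly} positive in the limit, yielding $E(v_n,\G) \ge \delta + o(1)$ for some $\delta > 0$, whence $E(u_n,\G) \ge E(u,\G) + \delta + o(1) > \elevel_\G(\mu)$ once we know $E(u,\G) \ge \elevel_\G(m) \ge$ (something controllable) --- the main obstacle is precisely to rule out this ``dichotomy'' branch, i.e.\ to show that a minimizing sequence with a nontrivial weak limit cannot also carry a fixed positive amount of kinetic energy off to infinity. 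The resolution I expect: combine the strict positivity of the Lemma~\ref{weakzero} prefactor with the observation that $E(u,\G) \ge \elevel_\G(m)$ and that, by scaling $u$ up to mass $\mu$ as above, $\elevel_\G(m) > \elevel_\G(\mu)$ strictly when $m < \mu$ (strict monotonicity of $\elevel_\G$ past $\muG$, since it is then negative and scales like $\lambda^2$), producing the contradiction $\elevel_\G(\mu) = \lim E(u_n,\G) \ge \elevel_\G(m) + \delta > \elevel_\G(\mu)$.

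The main difficulty, then, is the concentration-compactness dichotomy step: showing that no kinetic energy escapes, which hinges on the strict gap between $\mu$ and $\muR$ (making Lemma~\ref{weakzero}'s coefficient strictly positive) rather than on $\muG < \muR$. Once that is in place, $u$ has full mass $\mu$, lies in $H^1_\mu(\G)$, and attains $\elevel_\G(\mu)$.
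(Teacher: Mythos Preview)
Your opening is right and matches the paper: Brezis--Lieb splitting gives
\[
E(u_n,\G)=E(u,\G)+E(v_n,\G)+o(1),
\]
and Lemma~\ref{weakzero} applied to $v_n=u_n-u$ (with $\|v_n\|_{L^2}^2\le\mu\le\muR$) yields $E(v_n,\G)\ge o(1)$, hence $E(u,\G)\le\elevel_\G(\mu)$. This is exactly how the paper proceeds.

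Where you go astray is in the second half. After establishing $E(u,\G)\le\elevel_\G(\mu)$, the paper simply rescales: if $m:=\|u\|_{L^2}^2<\mu$, set $c=\sqrt{\mu/m}>1$ and compute
\[
E(cu,\G)=\frac{c^2}2\|u'\|_{L^2}^2-\frac{c^6}6\|u\|_{L^6}^6
<c^2 E(u,\G)\le E(u,\G)\le \elevel_\G(\mu),
\]
the strict inequality coming from $c^6>c^2$ and $\|u\|_{L^6}>0$ (since $u\not\equiv 0$), the next from $c^2\ge1$ and $E(u,\G)\le\elevel_\G(\mu)\le 0$. Since $cu\in H^1_\mu(\G)$, this contradicts the definition of $\elevel_\G(\mu)$. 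That is the whole argument; it works uniformly for all $\mu\in[\muG,\muR]$ and does \emph{not} depend on what $v_n$ is doing.

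Your subcase split on $\|v_n'\|_{L^2}$ is therefore unnecessary, and it led you into a phantom ``dichotomy'' branch. The rescaling argument you wrote under the first subcase is already the complete proof that $m=\mu$; you just need to observe (as the paper does) that the chain $E(cu,\G)<c^2E(u,\G)\le\elevel_\G(\mu)$ uses only $E(u,\G)\le\elevel_\G(\mu)\le 0$ and $u\not\equiv 0$, both of which you have in hand regardless of whether $\|v_n'\|_{L^2}$ vanishes. Your separate treatment of $\mu=\muG$ via \eqref{Epos} is also correct but redundant: the same scaling gives $E(cu,\G)<c^2\cdot 0=0=\elevel_\G(\muG)$. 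Finally, the sketch you gave for the ``delicate case'' does not close: from $E(v_n,\G)\ge\delta+o(1)$ you would get $E(u,\G)\le\elevel_\G(\mu)-\delta$, which is not a contradiction by itself since $u$ has mass $m<\mu$; the monotonicity of $\elevel_\G$ you invoke then brings you right back to needing the rescaling argument anyway.
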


\begin{proof}
Passing to a subsequence, we may assume that
$u_n(x) \to u(x)$ a.e. in $\G$.
By a standard use of the Brezis--Lieb Lemma (\cite{BL}), as in \cite{ast2},
\[
E(u_n,\G) = E(u_n-u,\G) + E(u,\G) + o(1)
\]
as $n\to \infty$. Since $u_n-u \rightharpoonup 0$ in $H^1(\G)$,
from Lemma \ref{weakzero} applied
with $v_n=u_n-u$ we obtain
\[
E(u_n-u,\G) \ge \frac12\left(1 -  \frac{\| u_n - u\|_{L^2(\G)}^4}{\mu_\R^2}\right)  \| u_n' - u'\|_{L^2(\G)}^2 + o(1)
\]
as $n\to \infty$.
Therefore
\[
E(u_n,\G) \ge E(u, \G) + o(1),
\]
that is,
\[
E(u,\G) \le \elevel_\G(\mu).
\]
Now, by semicontinuity, we have $m:=\Vert u\Vert_{L^2(\G)}^2\leq\mu$, and $m>0$ by assumption.
 If $m<\mu$, then
\[
E\left(\sqrt{\frac{\mu}{m}}u,\G\right) =\frac{\mu}{m}\frac12\int_\G |u'|^2\dx - \left(\frac{\mu}{m}\right)^3\frac16 \int_\G |u|^6\dx <
\frac{\mu}{m}E(u,\G)\le \elevel_\G(\mu),
\]
since $\mu/m>1$, $\Vert u\Vert_{L^6(\G)}>0$ and $\elevel_\G(\mu) \le 0$. This contradicts the definition of $\elevel_\G(\mu)$. Then it must be $m=\mu$,
namely $u$ is the required minimizer. It is also easy to see that the convergence of $u_n$ to $u$ is strong in $H^1(\G)$.
\end{proof}

The next lemma establishes a crucial modification of the Gagliardo--Nirenberg inequality. Its proof is quite long, and is therefore
split in a series of steps.

\newcommand\exM{\theta}

\begin{lemma}[modified G-N inequality]\label{modGN} Assume $\G$ is noncompact and
has no terminal point, and let $u\in H^1_\mu(\G)$ for some $\mu\in (0,\muR]$. Then
there exists a number $\exM\in [0,\mu]$ such that
\begin{equation}
\label{sharpineq}
\|u\|_{L^6(\G)}^6   \le  3\left(\frac{\mu-\exM}{\muR}\right)^2  \|u'\|_{L^2(\G)}^2  +C\exM^{1/2},
\end{equation}
where $C>0$ is a constant that depends only on $\G$.
\end{lemma}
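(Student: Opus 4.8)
The plan is to adapt the truncation device from the proof of Lemma~\ref{weakzero}, but to apply it at a \emph{free} level $\eps>0$, to be optimised only at the very end, rather than at the level $\max_{\K}|u|$. Write $\K$ for the compact core of $\G$, fix $\eps>0$, and set
\[
w(x)=\begin{cases}
u(x)-\eps & \text{if } u(x)\ge \eps,\\[1mm]
0 & \text{if } |u(x)|<\eps,\\[1mm]
u(x)+\eps & \text{if } u(x)\le -\eps.
\end{cases}
\]
Then $|u-w|\le\eps$ everywhere, $w$ is supported on the open set $A=\{|u|>\eps\}$, $\|w'\|_{L^2(\G)}^2\le\|u'\|_{L^2(\G)}^2$ and $\|w\|_{L^2(\G)}^2\le\mu$. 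Put $\theta:=\mu-\|w\|_{L^2(\G)}^2\in[0,\mu]$. Since $|u|=|w|+\eps$ on $A$,
\[
\theta=\int_{\{|u|\le\eps\}}|u|^2+2\eps\int_A|w|+\eps^2|A| ,
\]
so that $\theta\ge\eps^2|A|$, $\theta\ge2\eps\int_A|w|$, and (the three integrands being dominated by $2|u|^2$ and tending to $0$ pointwise) $\theta=\theta(\eps)\to0$ as $\eps\to0$.

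Next decompose $A$ into its connected components and split $w=v+z$ accordingly: let $v$ carry the components that are topologically intervals (no internal branch vertex, no cycle) and $z$ carry the rest. On each component supporting $v$, the function $w$ vanishes at both endpoints --- and this is precisely where the hypothesis is used: an endpoint of such a component is a boundary point of $A$, hence a point where $|u|=\eps$, \emph{unless} it is a vertex of degree one, and there are none. Extending each such component by zero produces a function of $H^1(\R)$, to which the sharp one-dimensional inequality $\|f\|_{L^6(\R)}^6\le\frac3{\muR^2}\|f\|_{L^2(\R)}^4\|f'\|_{L^2(\R)}^2$ applies; summing over components via $\sum_k a_k^2 b_k\le(\sum_k a_k)^2(\sum_k b_k)$ (for $a_k,b_k\ge0$) and using $\|v\|_{L^2(\G)}^2\le\mu-\theta$, $\|v'\|_{L^2(\G)}^2\le\|u'\|_{L^2(\G)}^2$, one gets
\[
\|v\|_{L^6(\G)}^6\le 3\Bigl(\frac{\mu-\theta}{\muR}\Bigr)^2\|u'\|_{L^2(\G)}^2 ,
\]
which is exactly the main term of \eqref{sharpineq}. (The hypothesis is indispensable: on a graph with a long pendant edge one can place there a rescaled soliton of mass $\mu$ and make $\|u\|_{L^6(\G)}^6$ arbitrarily large, while every admissible right-hand side of \eqref{sharpineq} stays bounded.)

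It remains to bound the error $\|u\|_{L^6(\G)}^6-\|v\|_{L^6(\G)}^6$ by $C\theta^{1/2}$. On $\{|u|\le\eps\}$ one has $\int|u|^6\le\eps^4\int|u|^2\le\eps^4\theta$. On $A$ one has $|u|^6-|w|^6\le6\eps(|w|+\eps)^5=6\eps|u|^5$, and the only genuinely delicate piece is $\int_{\mathrm{supp}\,z}|w|^6$ over the ``bad'' components. For these one exploits that they have bounded \emph{combinatorial} complexity (at most finitely many branch vertices and independent cycles, a number depending only on $\G$); one estimates $z$ on each of their edges using the Sobolev bound $\|g\|_{L^\infty(I)}\le(|I|/2)^{1/2}\|g'\|_{L^2(I)}$ whenever $g$ vanishes at an end of $I$, together with $\|u\|_{L^\infty(\G)}^2\le2\sqrt\mu\,\|u'\|_{L^2(\G)}$ and the mass bounds of Step~1, reducing everything to finitely many quantities of the form $\eps^{a}\,\theta^{b}\,\|u'\|_{L^2(\G)}^{c}$ with $a>0$. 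One then calibrates $\eps=\eps(u)$ --- essentially a fixed negative power of $\max(1,\|u'\|_{L^2(\G)})$, kept bounded --- so that each of these quantities is $\le C(\G)\,\theta^{1/2}$; since $\theta(\eps)\to0$ as $\eps\to0$, the calibration is consistent.

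The main obstacle is this last step: the control of the ``bad'' components and the choice of $\eps$. There is no $\G$-uniform bound on $\|u\|_{L^\infty(\G)}$ (equivalently on $\|u'\|_{L^2(\G)}$) for $u\in H^1_\mu(\G)$, so a truncation at a \emph{fixed} level produces an error that blows up as $\|u'\|_{L^2(\G)}\to\infty$; the whole difficulty is to shrink $\eps$ with $\|u'\|_{L^2(\G)}$ at exactly the right rate, playing the (combinatorially bounded, hence geometrically controlled) size of the bad region against $\|u'\|_{L^2(\G)}$, and to handle separately the regime $\|u'\|_{L^2(\G)}\lesssim1$ --- where $\|u\|_{L^6(\G)}^6$ is bounded by a $\G$-constant and the residual error is absorbed by $C\theta^{1/2}$ for a suitable bounded choice of $\theta$. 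A secondary but essential point: the constant $3/\muR^2$ must survive the summation over the good components with \emph{no} loss, which is why one truncates towards zero (imposing Dirichlet conditions at both ends of each good component) rather than, say, reflecting across the half-line roots, which would only yield $12/\muR^2$. This is why, as already anticipated, the argument has to be broken into several steps.
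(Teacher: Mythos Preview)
Your truncation strategy is genuinely different from the paper's rearrangement argument, and the first two steps are sound: the no-terminal-point hypothesis does force Dirichlet conditions at both ends of every ``good'' interval component of $\{|u|>\eps\}$, so the sharp constant $K_\R$ survives there, and your definition of $\theta$ is reasonable. The fatal gap, which you yourself identify but do not close, is the treatment of the ``bad'' components.

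Take $\G$ a star with three or more half-lines and $u$ a symmetric profile concentrated at the central vertex at scale $1/\lambda$, with $\lambda$ large (so $\|u'\|_{L^2}\sim\lambda$ and $\|u\|_{L^6}^6\sim\lambda^2$). For \emph{every} $\eps\in(0,\|u\|_\infty)$ the set $\{|u|>\eps\}$ is a single star-shaped neighbourhood of the vertex: one bad component, no good components. Hence $v\equiv 0$, and the entire quantity $\|u\|_{L^6}^6\sim\lambda^2$ must be absorbed by the error. But $\theta\le\mu\le\muR$, so $C\theta^{1/2}$ is bounded independently of $\lambda$, and the inequality fails for $\lambda$ large. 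Your asserted reduction to ``quantities of the form $\eps^a\theta^b\|u'\|^c$ with $a>0$'' is simply false here: $\int_{\text{bad}}|w|^6$ carries no factor of $\eps$. Neither does the edge-wise Sobolev bound help --- an interior edge of a bad component need not have $w$ vanishing at either endpoint, and even when it does, the resulting estimate scales the wrong way in $\lambda$. No calibration of $\eps$ can repair this, because the obstruction is topological (the maximum sits at a branch point), not a matter of scale.

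The paper sidesteps the issue by rearranging first and truncating afterwards. Using the no-terminal-point hypothesis, it finds a path $\gamma$ of fixed length $\ell=\ell(\G)$ starting at the maximum of $u$, rearranges $u|_\gamma$ onto $[0,\ell]$ and $u|_{\overline{\G\setminus\gamma}}$ onto $(-\infty,0]$, obtaining $\psi\in H^1((-\infty,\ell])$ with the same $L^p$ norms and no larger Dirichlet energy. All graph topology is now gone. A one-dimensional surgery on $[\ell/2,\ell]$ (replacing $\psi$ by an exponential tail) then extends $\psi$ to a function $w\in H^1(\R)$ at the cost of a mass defect $\theta$ and controlled $L^6$, $H^1$ errors; the sharp Gagliardo--Nirenberg inequality on $\R$ applied to $w$ gives the result. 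The key difference is that rearrangement absorbs the branch-vertex concentration into the monotone part $\psi|_{(-\infty,0]}$, where it causes no trouble, rather than leaving it as a ``bad'' residue to be estimated separately.
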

\begin{proof} Replacing $u$ with $|u|$, we may assume that $u\geq 0$ and
$u\not\equiv 0$.
\medskip

\noindent\emph{Step 1. }
There exist $\ell>0$ (depending only on $\G$)
 and a function $\psi\in H^1((-\infty,\ell])$, such that
\begin{enumerate}
\item $\int_{-\infty}^\ell |\psi|^2\dx = \int_\G |u|^2\dx = \mu$ 
\item $\int_{-\infty}^\ell |\psi|^6\dx = \int_\G |u|^6\dx$,
\quad\text{while}
\quad
$\int_{-\infty}^\ell |\psi'|^2 \dx \le  \int_\G |u'|^2 \dx$;
\item $\psi$ is nonnegative on $(-\infty,\ell]$, and nonincreasing on $[0,\ell]$;
\end{enumerate}

\proof[Proof of step 1] If $\G$ has at least a loop (a homeomorphic image of $\mathcal S^1$),
let $2\ell$ denote the length of the shortest loop; otherwise, let $\ell:=1$.
Let us denote by $M$ the maximum of $u$ on $\G$, and by $x_0$ a point of $\G$ such that
$u(x_0) = M$: since $\G$ is noncompact and
connected, there is a path $\Gamma$ in $\G$ that joins an $\infty$-point
of $\G$ to $x_0$.

Since $x_0$ is not a terminal point of $\G$, the path $\Gamma$ can be prolonged
 beyond $x_0$, to a longer path that crosses $x_0$. Two cases are possible:
 (i) $\Gamma$ can be prolonged beyond $x_0$ for a length
  $\ell$, or (ii) a self intersection occurs (i.e. a loop is created) before
  an extra length of $\ell$ has been traveled.

\smallskip

  In case (i), let $\gamma$ denote the new path (of length $\ell$, starting at $x_0$)
  used to prolong $\Gamma$ (note: $\gamma$ shares with $\Gamma$ only the point $x_0$).
  We then define the function $\psi:[0,\ell]\to\R$ as the decreasing rearrangement of
  $u_\gamma$ (i.e. $u$ restricted to $\gamma$), so that
  \[
  \psi(0)=M,\quad \int_0^\ell |\psi'|^2\,dx\leq
  \int_\gamma |u'|^2\,dx,
  \quad
  \int_0^\ell |\psi|^p\,dx=
  \int_\gamma |u|^p\,dx\quad\forall p.
  \]
Now assume, for a while,
that the metric graph $\overline{\G\setminus\gamma}$
(the closure of $\G\setminus\gamma$) is \emph{connected}:
in this case, we can consider the function $u^*:[0,+\infty)\to\R$,
defined as the decreasing rearrangement of  the
restricted function $u_{\overline{\G\setminus\gamma}}$, and observe that,
as before,
\[
  u^*(0)=M,\quad \int_0^\infty |(u^*)'|^2\,dx\leq
  \int_{\G\setminus \gamma} |u'|^2\,dx,
  \quad
  \int_0^\infty |u^*|^p\,dx=
  \int_{\G\setminus\gamma} |u|^p\,dx\quad\forall p.
  \]
Then the construction of $\psi$ is easily completed, by letting $\psi(x)=u^*(-x)$
for every $x<0$.

In general, though, $\overline{\G\setminus\gamma}$
can  be disconnected (which would prevent the
use of the monotone rearrangement): one of its connected component $\G_0$,
however, contains the original path $\Gamma$ and,
 since $u\geq 0$ and $\Gamma$ contains a half-line along which $u$ tends to zero,
the range of $u_\Gamma$ ($u$ restricted to $\Gamma$)
is the interval $[0,M]$, that is, the full range of $u$.
Therefore,
by a graph--surgery procedure, every other connected component $\G_j$ ($j>0$)
of $\overline{\G\setminus\gamma}$
can still be ``reattached'' to $\Gamma$ (hence to $\G_0$) as follows:
if $\G_j$ was originally attached to $\gamma$ at a point $y_j$ (not necessarily unique),
 take $z_j\in\Gamma$ such that $u(z_j)=u(y_j)$, and attach $y_j$ at $z_j$
 (i.e., identify $y_j$ with $z_j$).
By this trick (which preserves the continuity of $u$, its integral norms and those of $u'$)
the restricted function
 $u_{\overline{\G\setminus\gamma}}$ can now be seen
 as if defined on a connected graph, and the theory of rearrangements becomes available:
 then, the construction of $\psi$ can be completed as before.

\smallskip

In case (ii), let $x_1\in\G$ be the point where the self-intersection occurs, and let $\gamma$
denote the new-added curve, from $x_0$ to $x_1$. The length of $\gamma$ is at most $\ell$,
otherwise we would be in case (i), and hence (since $\G$ has no loop of length smaller than
$2\ell$) we see that $x_1\in\Gamma$ (in other words, the self-intersection occurs along
$\Gamma$, not along $\gamma$, which is a simple arc). For the same reason,
the length of the complementary arc $\gamma'\subset\Gamma$, from $x_1$ to $x_0$,
 is \emph{at least} $\ell$. But then, observing that $\Gamma':=(\Gamma\setminus\gamma')\cup\gamma$
is still a path from an $\infty$-point of $\G$ to $x_0$, the proof can be completed
as in case (i),  with $\Gamma'$ and (a suitable portion of) $\gamma'$ (the latter prolonging the former,
for a length of  $\ell$) now playing the roles of the original $\Gamma$ and $\gamma$.

\medskip

\noindent\emph{Step 2. }
There exist $x_0 \in [\ell/2,\ell)$ and $v\in H^1(\R^+)$ such that, defining
\begin{equation}
\label{defM}\exM:= \frac12\int_{x_0}^\ell \psi^2\dx,
\end{equation}
there hold:
\begin{itemize}
\item[i)] $v(0) = \psi(0)$;
\item[ii)] $\displaystyle\int_0^\infty |v|^2\dx = \int_0^\ell |\psi|^2\dx -  \exM$;
\item[iii)] $\displaystyle \int_0^\infty |v'|^2\dx \le \int_0^\ell |\psi'|^2\dx  + C\exM^{1/2}$;
\item[iv)] $\displaystyle \int_0^\infty |v|^6\dx \ge \int_0^\ell |\psi|^6\dx  - C\exM$;
\end{itemize}
the constant $C>0$ depends only on  $\G$.

\medskip

\proof[Proof of step 2.] Here we shall work with $\psi$ restricted to the interval $[0,\ell]$.

We first prove the existence of $x_0\in [\ell/2,\ell)$ such that
\begin{equation}
\label{eq55}
|\psi(x_0)|^4 \leq \frac {64 m^{1/2}}{\ell^2}\left(\int_{x_0}^\ell |\psi|^2\,dx\right)^{3/2},\qquad
m:=\int_{\ell/2}^\ell |\psi|^2\,dx.
\end{equation}
To see this, let $F(x)=\int_x^\ell |\psi|^2\dx$. If the inequality in \eqref{eq55} were false
for every $x_0\in [\ell/2,\ell)$, we would have
\[
-F'(x) = \psi(x)^2  > \frac {8 m^{1/4}}{\ell} \left(\int_x^\ell |\psi|^2\,dx\right)^{3/4}=  \frac {8 m^{1/4}}{\ell} F(x)^{3/4}
\]
for every $ x\in [\ell/2,\ell)$. Therefore,
\[
-\left(F(x)^{1/4}\right)' >\frac {2m^{1/4}}{\ell} \qquad\forall x\in [\ell/2,\ell)
\]
and, since $F(\ell)=0$, integration over $(\ell/2,\ell)$ yields
\[
F(\ell/2)^{1/4}> \frac {2 m^{1/4}}{\ell} \cdot \frac \ell 2 = m^{1/4},
\]
which is clearly a contradiction due to how $m$ and $F$ were defined.

We now take a point $x_0$ satisfying \eqref{eq55}, and define $\exM$
as in \eqref{defM}. If $\exM =0$, then $\psi$ vanishes on $[x_0,\ell]$; in this case we define $v$ by extending
$\psi$ to $0$ on $[\ell,+\infty)$.
If $\theta \ne 0$, we define $v: [0,+\infty) \to \R$ as
\[
v(x)=\begin{cases}
\psi(x) & \text{if $0\leq x \leq x_0$,}\\
\psi(x_0)e^{-\lambda (x-x_0)}&\text{if $x>x_0$}
\end{cases}
\]
where
\[
\lambda:=\frac {|\psi(x_0)|^2}{2\exM}.
\]
Clearly $v\in H^1(\R^+)$ and $v(0) = \psi(0)$, so that $i)$ is satisfied. Next,
\begin{align}
\int_0^\infty |v|^2 \dx &= \int_0^{x_0} |\psi|^2 \dx + |\psi(x_0)|^2 \int_{x_0}^\infty e^{-2\lambda(x-x_0)} \dx = \int_0^{x_0} |\psi|^2 \dx + \frac {|\psi(x_0)|^2}{2\lambda}\nonumber \\
&=  \int_0^{x_0} |\psi|^2 \dx +   \exM =  \int_0^\ell |\psi|^2 \dx -  \exM,
\end{align}
and $ii)$ is proved. Similarly,
\begin{align}
\int_0^\infty |v'|^2 \dx &= \int_0^{x_0} |\psi'|^2 \dx + \lambda^2|\psi(x_0)|^2 \int_{x_0}^\infty e^{-2\lambda(x-x_0)} \dx = \int_0^{x_0} |\psi'|^2 \dx + \frac {\lambda|\psi(x_0)|^2}{2} \nonumber \\
&=  \int_0^{x_0} |\psi'|^2 \dx + \frac{|\psi(x_0)|^4}{4\exM} \le  \int_0^\ell |\psi'|^2 \dx +\frac{32 (2m)^{1/2}}{\ell^2}  \exM^{1/2} \nonumber \\
&\le  \int_0^\ell |\psi'|^2 \dx + C\exM^{1/2}
\end{align}
since $m\le \muR = \pi\sqrt 3 /2$ by \eqref{Rlevel}, while $\ell$ depends only on $\G$.
 This proves $iii)$.
Finally,
\begin{align}
\label{v6}
\int_0^\infty |v|^6 \dx &= \int_0^{x_0} |\psi|^6 \dx + |\psi(x_0)|^6 \int_{x_0}^\infty e^{-6\lambda(x-x_0)} \dx = \int_0^{x_0} |\psi|^6 \dx + \frac {|\psi(x_0)|^6}{6\lambda} \nonumber \\
& \ge   \int_0^\ell |\psi|^6 \dx - \int_{x_0}^\ell |\psi|^6\dx.
\end{align}
Since $\psi$ is decreasing (on $[0,\ell]$) and $x_0\geq \ell/2$, we have
\[
\int_{x_0}^\ell |\psi|^6\,dx \le |\psi(x_0)|^4 \int_{x_0}^\ell |\psi|^2\dx \le   2\exM|\psi(\ell/2)|^4
\]
and
\[
|\psi(\ell/2)|^2\leq \frac 2 {\ell} \int_0^{\ell/2} |\psi|^2\,dx
< \frac {2\mu}\ell \le C
\]
as above,
which plugged into the previous inequality gives
\[
\int_{x_0}^\ell |\psi|^6\,dx \le C\exM.
\]
Therefore \eqref{v6} reads
\[
\int_0^\infty |v|^6 \dx \ge  \int_0^\ell |\psi|^6 \dx -C\exM,
\]
and this concludes the proof.

\medskip

\noindent\emph{Step 3.} Combining $\psi$ and $v$, we now define
\[
w(x)=\begin{cases}
\psi(x) & \text{if $x \le 0$,}\\
v(x) &\text{if $x>0$.}
\end{cases}
\]
Clearly $w\in H^1(\R)$ and, by the properties of $\psi$ and $v$,
\[
\int_\R |w|^2\dx = \int_{-\infty}^0 |\psi|^2\dx + \int_0^\infty |v|^2\dx =  \int_{-\infty}^\ell |\psi|^2\dx-  \exM =
\int_\G|u|^2\dx -\exM =\mu - \exM.
\]

By the Gagliardo--Nirenberg inequality \eqref{GN},
\begin{align}
\label{GNw}
\|w\|_{L^6(\R)}^6 &\le K_\R \left(\mu -\exM\right)^2 \|w'\|_{L^2(\R)}^2 = K_\R\muR^2 \left(\frac{\mu-\exM}{\muR}\right)^2  \|w'\|_{L^2(\R)}^2\nonumber \\
& = 3\left(\frac{\mu-\exM}{\muR}\right)^2  \|w'\|_{L^2(\R)}.
\end{align}
Now, still from the properties of $\psi$ and $v$,
\begin{align}
\label{stima6}
\|w\|_{L^6(\R)}^6 &= \int_{-\infty}^0 |\psi|^6\dx + \int_0^\infty |v|^6\dx \nonumber \\
&\ge \int_{-\infty}^0 |\psi|^6\dx + \int_0^\ell |\psi|^6 \dx - C\exM
= \|u\|_{L^6(\G)}^6  -C\exM
\end{align}
and
\begin{align}
\label{stimader}
\|w'\|_{L^2(\R)}^2 &= \int_{-\infty}^0 |\psi'|^2\dx + \int_0^\infty |v'|^2\dx \nonumber \\
& \le \int_{-\infty}^0 |\psi'|^2\dx + \int_0^\ell |\psi'|^2 \dx + C\exM^{1/2}
\le \|u'\|_{L^2(\G)}^2  +C\exM^{1/2}.
\end{align}
Inserting \eqref{stima6} and \eqref{stimader} in \eqref{GNw} we obtain
\[
 \|u\|_{L^6(\G)}^6  -C\exM \le  3\left(\frac{\mu-\exM}{\muR}\right)^2  \left( \|u'\|_{L^2(\G)}^2  +C\exM^{1/2}\right).
\]
Rearranging terms and observing that $\exM \le \exM^{1/2}\muR^{1/2} = C\exM^{1/2}$,
one obtains \eqref{sharpineq}.

\end{proof}

We are now in a position to prove Proposition~\ref{mainprop}.

\proof[Proof of Proposition \ref{mainprop}]
Fix a mass $\mu\in(\muG,\muR]$. We know from Proposition~\ref{banali}
that $\elevel_\G(\mu)<0$ (possibly $-\infty$, until proven otherwise), thus we only consider
functions $u\in H^1_\mu(\G)$ such that $E(u,\G) \le -\alpha $ for some fixed $\alpha>0$
that depends on $\mu$.
Let $u$ be any of these functions:
since $\mu \le \muR$, Lemma \ref{modGN} applies, and \eqref{sharpineq}  yields
\[
\|u\|_{L^6(\G)}^6   \le  3\left(1-\frac{\exM_u}{\muR}\right)^2  \|u'\|_{L^2(\G)}^2  +C\exM_u^{1/2}
\]
for some  $\exM_u\in (0,\mu)$. We have denoted by $\exM_u$ the constant $\exM$ appearing in \eqref{sharpineq} to
stress its dependence on $u$.
Rearranging terms we obtain
\[
3\frac{\exM_u}{\muR}\left(2-\frac{\exM_u}\muR \right)\|u'\|_{L^2(\G)}^2-C\exM_u^{1/2} \le  6E(u,\G) \le -6\alpha,
\]
and (since $\exM_u <\mu\leq\muR$) this shows that $\exM_u$ is bounded away from zero (in terms of
$C$ and $\alpha$, uniformly with respect to $u$).
Once this is established, the same inequality also shows that $ \|u'\|_{L^2(\G)}$ is bounded from above,
in terms of
$C$ and $\alpha$.
 We have thus proved that whenever $E(u,\G) \le \alpha <0$, the $L^2$ norm of $u'$ on $\G$
is uniformly bounded (the bounding constant depends only on $\G$ and $\mu$, via $C$ and $\alpha$).
By the Gagliardo--Nirenberg inequality the same holds for $\|u\|_{L^p(\G)}$, for every $p\in [2,+\infty]$.
In particular, $\elevel_\G(\mu)$ is finite.

Finally, we show that $\elevel_\G(\mu)$ is attained.

Let $u_n \in H_{\mu}^1(\G)$ be a minimizing sequence for $E(\,\cdot\, , \G)$. By the preceding argument we can assume that $\|u_n'\|_{L^2(\G)}$,
$\|u_n\|_{L^6(\G)}$ and $\|u_n\|_{L^\infty(\G)}$ are bounded independently of $n$. Up to subsequences,
$u_n\rightharpoonup u$ in $H^1(\G)$, as well as $u_n\to u$ in $L^q_{\rm loc}(\G)$ for every $q\in [2,\infty]$.

If $u \equiv 0$, by Lemma \ref{weakzero},
\[
E(u_n, \G) \ge \frac12\left(1 -  \frac{\mu^2}{\mu_\R^2}\right)  \| u_n'\|_{L^2(\G)}^2 + o(1) \ge o(1),
\]
since $\mu \le \mu_\R$. This implies $\elevel_\G(\mu) \ge 0$, which is false. Thus, the weak limit $u$ is not identically zero and, by Lemma \ref{limit}, is the required minimizer.
\bigskip

\section{Proof of the main results}

In this section, building on Proposition~\ref{mainprop},  we present
the proofs of
Theorems~\ref{teounasemi} and \ref{quartocaso}.

\begin{proof}[Proof of Theorem \ref{teounasemi}] Let $\phi$ be the function
defined by \eqref{solit}, thought of as a half-soliton on $\R^+$ and notice that $\phi(0) = 1$.
Identify $\HH$, the unique half-line of $\G$, with the interval $[0,+\infty)$ and for $\eps>0$ set
\[
u_\eps(x) = \begin{cases} \sqrt \eps \phi (\eps x) & \text{ if  $x\in\HH$ (i.e. if $x \in [0,+\infty)$)} \\
\sqrt \eps & \text{ elsewhere on } \G. \end{cases}
\]
Observe that $\G\setminus\HH\ne\emptyset$, because $\G$ has no terminal point.
Since $\G$ is connected and
$\HH$ is attached to $\G\setminus\HH$ at $x=0$,
we see that $u_\eps \in  H^1 (\G)$ and hence
\[
K_\G \ge\frac {\| u_\eps \|_{L^6 (\G)}^6}{\| u_ \eps \|_{L^2 (\G)}^4 \| u_ \eps' \|_{L^2 (\G)}^2 }  =
\frac {\| \phi \|_{L^6 (\R^+)}^6 + \ell\eps ^3 }
{(\| \phi \|_{L^2 (\R^+)}^2 + \ell\eps )^2 \| \phi' \|_{L^2 (\R^+)}^2 },
\]
where  $\ell$ is the measure (i.e. the total length) of $\G\setminus \HH$.
Since the last quotient tends to $K_{\R^+}$ as $\eps\to 0$,
we have  $K_\G \ge K_{\R^+}$ and, in fact, equality occurs, by \eqref{interK}.
Thus $\muG=\muRp$.

Now the fact that $\elevel_\G(\mu)$ is attained when $\mu \in (\muRp,\muR]$ is the content of Proposition \ref{mainprop}.

Finally, since  $\muRp = \muG$, $\elevel_\G(\muRp)=0$. If $u\in H_{\muRp}^1(\G)$ is such that $E(u,\G) = 0$, then its decreasing rearrangement $u^*$ on $\R^+$ satisfies $E(u^*,\R^+) \le 0$. Then, exactly as in the last part of the proof of Theorem \ref{teobaffo},
this implies that $\G$ is (isometric to) $\R^+$. But this is impossible, since $\G$ has no terminal point.
\end{proof}
To complete the proof of Theorem \ref{quartocaso} we need the following lemma.

\begin{lemma}[Bridge doubling] 
\label{bridge}
Assume $\G$ is noncompact,
 and let $\B$ denote the union of all those edges
that do not belong to any cycle. Then, for  every $u\in H_\mu^1(\G)$,
\begin{equation}
\label{brdoub}
\int_\G |u|^6\,dx +
3\int_\B |u|^6\,dx\leq
3\left(\frac {\mu+3\mu_\B}{\muR}\right)^2
\int_\G |u'|^2\,dx,
\end{equation}
where
\begin{equation}
\label{defmuB}
\mu_\B:= \int_\B |u|^2\,dx.
\end{equation}
\end{lemma}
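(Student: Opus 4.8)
The plan is to derive \eqref{brdoub} from the sharp Gagliardo--Nirenberg inequality on $\R$, applied not to $u$ itself but to a one‑dimensional competitor $v\in H^1(\R)$ whose distribution function is that of $u$ \emph{plus three extra copies of the part of the distribution carried by the bridges}; this is the analytic form of ``doubling the bridges'', and it is tuned so that the loss of multiplicity on bridges is exactly offset. First I would reduce to $u\geq0$, $u\not\equiv0$ (replacing $u$ by $|u|$ changes neither side, since $\bigl|\,|u|'\,\bigr|\le|u'|$), put $M:=\max_\G u$, and introduce the distribution functions $\rho(t):=|\{x\in\G:u(x)>t\}|$ and $\rho_\B(t):=|\{x\in\B:u(x)>t\}|$. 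By the layer‑cake formula $\int_\G|u|^6=\int_0^\infty6t^5\rho\,dt$, $\mu=\int_0^\infty2t\rho\,dt$, and the same with $\rho_\B$, $\int_\B|u|^6$, $\mu_\B$; moreover $\rho(t)\leq\mu/t^2$, so $\sigma:=\rho+3\rho_\B$ is finite for $t>0$. The basic analytic tool is the coarea/Cauchy--Schwarz bound for the Dirichlet energy: for a.e.\ $t\in(0,M)$ the level set $u^{-1}(t)$ is a finite set of regular points disjoint from the vertices, with $-\rho'(t)=\sum_{x\in u^{-1}(t)}|u'(x)|^{-1}$ and $\int_\G|u'|^2=\int_0^M\sum_{x\in u^{-1}(t)}|u'(x)|\,dt$, whence
\[
\int_\G|u'|^2\,dx\ \geq\ \int_0^M\frac{N(t)^2}{-\rho'(t)}\,dt,\qquad N(t):=\#\,u^{-1}(t).
\]

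The crux is a topological dichotomy on the set $S:=\{t\in(0,M):N(t)=1\}$: \emph{for a.e.\ $t\in S$ the unique preimage $x_0$ lies in the interior of a bridge}. Fix such a $t$, with $x_0$ regular and not a vertex, and suppose instead that $x_0$ lies on an edge belonging to some cycle. If removing $x_0$ leaves $\G$ connected, then $u-t$ has a constant sign on $\G\setminus\{x_0\}$, contradicting that $u-t$ changes sign across the regular point $x_0$. Otherwise $\G\setminus\{x_0\}$ splits into exactly two components, and the hypothesis that $x_0$ sits on a cycle forces each of them to contain a half‑line; since $u\to0$ along half‑lines we get $u<t$ on both components, so $x_0$ is a local maximum of $u$ --- impossible at a regular interior point. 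Hence on $S$ the single preimage lies in $\B^\circ$, so $-\rho'(t)=|u'(x_0)|^{-1}=-\rho_\B'(t)$, i.e.\ $\rho'=\rho_\B'$ a.e.\ on $S$.

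Now let $v\in H^1(\R)$ be the even, nonincreasing‑on‑$[0,\infty)$ function with distribution function $\sigma=\rho+3\rho_\B$. The layer‑cake formula gives $\int_\R v^6=\int_\G|u|^6+3\int_\B|u|^6$ and $\int_\R v^2=\mu+3\mu_\B$, while a direct computation on radially decreasing functions gives $\int_\R|v'|^2=4\int_0^M dt/(-\sigma'(t))$. On $S$ we have $-\sigma'=-\rho'-3\rho_\B'=4(-\rho')$ by the previous step, so $4/(-\sigma')=1/(-\rho')=N(t)^2/(-\rho'(t))$; off $S$ we have $N(t)\geq2$ and $-\rho_\B'\geq0$, so $4/(-\sigma')\leq4/(-\rho')\leq N(t)^2/(-\rho'(t))$. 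Integrating, $\int_\R|v'|^2\leq\int_0^M N(t)^2/(-\rho'(t))\,dt\leq\int_\G|u'|^2$. Plugging $v$ into the sharp Gagliardo--Nirenberg inequality on $\R$ (whose best constant is $K_\R=3/\muR^2$, cf.\ Definition~\ref{critmass}) then yields
\[
\int_\G|u|^6+3\int_\B|u|^6=\int_\R v^6\leq\frac{3}{\muR^2}\Bigl(\int_\R v^2\Bigr)^2\int_\R|v'|^2\leq3\Bigl(\frac{\mu+3\mu_\B}{\muR}\Bigr)^2\int_\G|u'|^2 ,
\]
which is \eqref{brdoub}.

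The step I expect to be the real work is making the coarea formalism rigorous: handling plateaus of $u$ and zeros of $u'$ (so that $\rho,\rho_\B,\sigma$ need only be $BV$ and the identity for $\int_\R|v'|^2$ is to be read off their absolutely continuous parts), discarding the measure‑zero set of vertex levels, and giving a clean proof of the ``multiplicity one $\Rightarrow$ bridge'' dichotomy (including the claim that when $x_0$ is a cut point lying on a cycle, the two pieces of $\G\setminus\{x_0\}$ both run to infinity). Once these points are secured, the chain of inequalities above is routine.
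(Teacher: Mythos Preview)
Your argument is correct and arrives at \eqref{brdoub} by a genuinely different route from the paper. The paper's proof is geometric: it builds an auxiliary graph $\widetilde\G$ by replacing every bridge $e\in\B$ (of length $\ell$) with \emph{two} parallel edges of length $2\ell$, and lifts $u$ to $\widetilde u\in H^1(\widetilde\G)$ by setting $\widetilde u(x)=u(x/2)$ on each copy. A direct computation then gives $\int_{\widetilde\G}|\widetilde u|^p=\int_\G|u|^p+3\int_\B|u|^p$ for every $p$, while $\int_{\widetilde\G}|\widetilde u'|^2=\int_\G|u'|^2$ exactly; since $\widetilde\G$ now admits a cycle covering, Theorem~\ref{teoH} yields $K_{\widetilde\G}=K_\R$, and \eqref{brdoub} is just the Gagliardo--Nirenberg inequality for $\widetilde u$ on $\widetilde\G$. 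Your approach bypasses the auxiliary graph and works directly with the distribution function $\sigma=\rho+3\rho_\B$; in fact your competitor $v$ \emph{is} the symmetric rearrangement on $\R$ of the paper's $\widetilde u$, and your ``multiplicity one $\Rightarrow$ bridge'' dichotomy is precisely the level-set argument underlying the P\'olya--Szeg\H{o} step that Theorem~\ref{teoH} imports from \cite{ast}. The paper's version is shorter and more visual, and delegates all the coarea/rearrangement analysis to previously established results; yours is self-contained and makes the key combinatorial fact (single preimages can only occur on bridges) fully explicit, at the price of the BV/coarea technicalities you rightly flag at the end.
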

\begin{proof}
If $\B=\emptyset$, then $\G$ admits a cycle covering and, by Theorem~\ref{teoH},
$\muG=\muR$: then $\mu_\B=0$ and  \eqref{brdoub} reduces to the
Gagliardo-Nirenberg inequality. So we may assume $\B\not=\emptyset$:
the idea behind the proof is that \eqref{brdoub} is still the Gagliardo-Nirenberg inequality,
but for a modified function $\widetilde u$ on a modified graph $\widetilde{\G}$
where we force a cycle covering.

We construct $\widetilde{\G}$ from $\G$, as follows:
for every  edge  $e\in\B$,
we stretch $e$ by a factor $2$, and
we \emph{duplicate} the resulting edge, so that, if the original $e$ had length $\ell$,
there are now \emph{two} edges, each of length $2\ell$, joining
the same two vertices (or emanating from the same vertex, if $e$ is a half-line
and $\ell=+\infty$).
These
two new edges now form a cycle, so that the resulting graph $\widetilde{\G}$
admits a cycle covering.

Given $u\in H^1(\G)$, we construct $\widetilde{u}\in H^1(\widetilde{\G})$ as follows.
First, we let $\widetilde u\equiv u$ on $\G\setminus\B$. Then, for every
edge $e\in\B$,
we duplicate
$u$ (stretched horizontally by a factor $2$) on each of the two copies
of the stretched edge $2 e$ of $\widetilde\G$. Choosing a coordinate $x\in [0,\ell]$ on every $e\in\B$ of length  $\ell$,
this amounts to replacing $u(x)$ over $[0,\ell]$ with \emph{two copies}
of $u(x/2)$, over $[0,2 \ell]$ (these intervals are to be replaced with $[0,+\infty)$
when $e$ is a half-line).
It is then clear that for every $p\ge 1$,
\begin{equation}
\label{eq58}
\int_{\widetilde \G} \left|\widetilde{u}\right|^p\,dx=
\int_{\G\setminus\B} \left|u\right|^p\,dx
+
4 \int_{\B} \left|u\right|^p\,dx = \int_{\G} \left|u\right|^p\,dx
+
3 \int_{\B} \left|u\right|^p\,dx\
\end{equation}
(in particular, when $p=2$ and when $p=6$), and, similarly,
\begin{equation}
\label{eq59}
\int_{\widetilde \G} \left|\widetilde{u}'\right|^2\,dx =
\int_{\G\setminus\B} \left|u'\right|^2\,dx + \int_{\B} \left|u'\right|^2\,dx
= \int_{\G} \left|u'\right|^2\,dx.
\end{equation}
On the other hand, since $\widetilde{\G}$ is covered by cycles,
Theorem~\ref{teoH} gives $\mu_{\widetilde \G}=\muR$, so that
$\widetilde u$ is subject to a
Gagliardo-Nirenberg inequality that can be written
\[
\int_{\widetilde \G} \left|\widetilde{u}\right|^6\,dx\leq
3\left(\frac {\int_{\widetilde \G} \left|\widetilde{u}\right|^2\,dx}{\muR}\right)^2
\int_{\widetilde \G} \left|\widetilde{u}'\right|^2\,dx,
\]
and \eqref{brdoub} follows immediately using \eqref{defmuB}, \eqref{eq58} and \eqref{eq59}.
\end{proof}

\begin{proof}[Proof of Theorem \ref{quartocaso}] We recall that $\muG<\muR$ by assumption.
When $\mu \in (\muG,\muR]$, the fact that $\elevel_\G(\mu)$ is attained follows
from Proposition \ref{mainprop}.

On the other hand, the proof of Proposition \ref{mainprop} cannot be
adapted to the case where
 $\mu = \muG$,
because it is based on the inequality $\elevel_\G(\mu) <0$, that now is
replaced by $\elevel_\G(\muG) =0$. This is not a weakness of the proof:
now, indeed,  any sequence $u_n\in H^1_\mu(\G)$ such that $\Vert u_n'\Vert_{L^2}\to 0$
is, by virtue of \eqref{GN}, a minimizing sequence, so that minimizing sequences
are in general noncompact. To get compactness, one has to carefully select the
minimizing sequence, as follows.

Let $u_n \in H_{\mu_\G}^1(\G)$ be a maximizing sequence for the Gagliardo--Nirenberg inequality,
namely a sequence such that
\begin{equation}
\label{uptoK}
\frac{\|u_n\|_{L^6(\G)}^6}{\|u_n'\|_{L^2(\G)}^2}\; 
\to\; K_\G \mu_\G^2 = 3
\end{equation}
as $n\to \infty$.
Applying inequality \eqref{sharpineq} to $u_n$ keeping in mind that $\mu = \muG$ and $\exM\le \mu \le C$ we can
get rid of $\exM$ and write (for some other $C$)
\begin{equation}
\label{newstima}
\|u_n\|_{L^6(\G)}^6   \le  3\frac{\muG^2}{\muR^2}  \|u_n'\|_{L^2(\G)}^2  +C.
\end{equation}
We first notice that $\|u_n'\|_{L^2(\G)}$ (and hence also $\|u_n\|_{L^6(\G)}^6$) must be bounded. Indeed, if this is not the case,
along a subsequence,
\[
\lim_n \frac{\|u_n\|_{L^6(\G)}^6}{\|u_n'\|_{L^2(\G)}^2} \le \lim_n \left(3\frac{\muG^2}{\muR^2} +\frac{C}{\|u_n'\|_{L^2(\G)}^2}\right) = 3\frac{\muG^2}{\muR^2} < 3,
\]
by \eqref{newstima},
contradicting \eqref{uptoK}.

Now since $\G$ has at least two half-lines, by \eqref{brdoub} we have

\[
\int_\G |u_n|^6\,dx + 3\int_\B |u_n|^6\,dx\leq
3\left(\frac {\mu_\G+3\mu_n}{\mu_\R}\right)^2
\int_\G |u_n'|^2\,dx, \quad \mu_n:= \int_\B |u_n|^2\,dx,
\]
where $\B$ is defined as in Lemma \ref{bridge}.
This allows us to show that $\|u_n\|_{L^6(\G)}$ and $\|u_n'\|_{L^2(\G)}$ are bounded away from zero. If (for some subsequence) the two norms
tend to zero, then clearly also $\mu_n \to 0$. Dividing the preceding inequality by $\|u_n'\|_{L^2(\G)}$ we obtain
\[
\frac{\|u_n\|_{L^6(\G)}^6}{\|u_n'\|_{L^2(\G)}^2} \le 3\left(\frac {\mu_\G+3\mu_n}{\mu_\R}\right)^2 = 3\frac{\muG^2}{\muR^2} +o(1),
\]
and then
\[
\lim_n \frac{\|u_n\|_{L^6(\G)}^6}{\|u_n'\|_{L^2(\G)}^2} <3,
\]
contradicting again \eqref{uptoK}.

Finally, since $u_n$ is bounded in $H^1(\G)$, writing
\[
6E(u_n,\G) = 3 \|u_n'\|_{L^2(\G)}^2 - \|u_n\|_{L^6(\G)}^6 = \|u_n'\|_{L^2(\G)}^2\left( 3 -  \frac{\|u_n\|_{L^6(\G)}^6}{\|u_n'\|_{L^2(\G)}^2}\right)
\]
we see by \eqref{uptoK} that $E(u_n,\G) \to 0 = \elevel_\G(\mu_\G)$. We have thus constructed a minimizing sequence for $E(\,\cdot\,,\G)$ which is bounded and uniformly away from zero.

We may then assume that $u_n\rightharpoonup u$ in $H^1(\G)$, as well as the other usual convergence properties.
If $u\equiv 0$, by Lemma \ref{weakzero},
\[
E(u_n, \G) \ge \frac12 \left(1- \frac{\muG^2}{\muR^2}\right) \|u_n'\|_{L^2(\G)}^2 + o(1).
\]
Since $E(u_n,\G) \to 0$, this  contradicts the construction of the sequence $u_n$. Therefore the weak limit $u$ does not vanish identically, and by Lemma \ref{limit}, it is the required minimizer.
\end{proof}

\end{document}